\tikzstyle{max}=[shape=rectangle,draw,inner sep=0pt,minimum size=6mm,thick]
\tikzstyle{min}=[shape=diamond,draw,inner sep=0pt,minimum size=6mm,thick]
\tikzstyle{ran}=[shape=circle,draw,inner sep=0pt,minimum size=6mm,thick]
\tikzstyle{tran}=[thick,draw,->,>=stealth]
\newcommand{\Rset}{\mathbb{R}}
\newcommand{\E}{\mathbb{E}}
\newcommand{\G}{\mathcal{G}}
\newcommand{\M}{\mathcal{M}}
\newcommand{\calF}{\mathcal{F}}
\newcommand{\calP}{\mathcal{P}}
\newcommand{\Nset}{\mathbb{N}}
\newcommand{\Qset}{\mathbb{Q}}
\newcommand{\eps}{\varepsilon}
\newcommand{\Rni}{\Rset^{\geq 0}_\infty}
\newcommand{\Ord}{\mathit{Ord}}
\newcommand{\vleq}{\sqsubseteq}
\newcommand{\genTran}[2]{%
    {}\mathchoice%
    {\stackrel{#1}{#2}}
    {\mathop {\smash{#2}}\limits^{\vrule width 0pt height 0pt depth 4pt\smash{#1}}}
    {\stackrel{#1}{#2}}
    {\stackrel{#1}{#2}}
{}}
\newcommand{\tran}[1]{\genTran{#1}{\rightarrow}}
\newcommand{\Exp}{\mathbb{E}}
\newcommand{\Acc}{\mathit{Acc}}
\newcommand{\Reach}{\mathit{Reach}}
\newcommand{\Payoff}{\mathit{Payoff}}
\newcommand{\Prob}{\mathit{Prob}}
\newcommand{\fpath}{\mathit{Fpath}}
\newcommand{\run}{\mathit{Run}}
\newcommand{\val}{\mathrm{Val}}
\newcommand{\last}[1]{\mathrm{last(#1)}}
\newcommand{\MD}{\mathrm{MD}}
\newcommand{\MR}{\mathrm{MR}}
\newcommand{\HD}{\mathrm{HD}}
\newcommand{\HR}{\mathrm{HR}}
\newcommand{\ifApp}[2]%
{\ifthenelse{\isundefined{\showappendix}}{#2}{#1}}
\spnewtheorem{fact}[theorem]{Fact}{\bfseries}{\itshape}
\newcommand{\theoremlike}[2]{\par\medskip\penalty-250%
{{\bfseries\noindent
#2 \ref{#1}.}}\it}
\newcommand{\thmhelperpre}[2]{\theoremlike{#1}{#2}}
\newcommand{\thmhelperpost}{\par\medskip}
\newenvironment{reflemma}[1]{\thmhelperpre{#1}{Lemma}}{\thmhelperpost}
\title{Determinacy in Stochastic Games with Unbounded Payoff Functions}
\titlerunning{Games with Unbounded Payoffs}
\author{%
Tom\'{a}\v{s} Br\'{a}zdil\thanks{The authors are supported by 
the Czech Science Foundation, grant No.~P202/12/G061.}
\and Anton\'{\i}n Ku\v{c}era$^\star$ \and
Petr Novotn\'{y}$^\star$}
\authorrunning{
Br\'{a}zdil \and
Ku\v{c}era \and
Novotn\'{y}
}
\institute{
Faculty of Informatics, Masaryk University \\
\email{\{xbrazdil,kucera,xnovot18\}@fi.muni.cz}
}
\begin{document}

\maketitle

\begin{abstract}
  We consider infinite-state turn-based stochastic games of two
  players, $\Box$~and~$\Diamond$, who aim at maximizing and minimizing
  the expected total reward accumulated along a run,
  respectively. Since the total accumulated reward is unbounded, the
  determinacy of such games cannot be deduced directly from Martin's
  determinacy result for Blackwell games.
  Nevertheless, we show that these games \emph{are} determined both
  for unrestricted (i.e., history-dependent and randomized) strategies
  and deterministic strategies, and the equilibrium value is the
  same. Further, we show that these games are generally \emph{not}
  determined for memoryless strategies. Then, we consider a subclass
  of \emph{$\Diamond$-finitely-branching} games and show that they are
  determined for all of the considered strategy types, where the
  equilibrium value is always the same.
  We also examine the existence and type of ($\varepsilon$-)optimal
  strategies for both players.
\end{abstract}

\section{Introduction}
\label{sec-intro}

Turn-based stochastic games of two players are  
a standard model of discrete systems that exhibit both non-deterministic and
randomized choice. One player (called $\Box$ or Max in this paper) 
corresponds to the controller who wishes to achieve/maximize some 
desirable property of the system, and the other player (called $\Diamond$ 
or Min) models the environment which aims at spoiling the 
property. Randomized choice is used to model events such as system 
failures, bit-flips, or coin-tossing in randomized algorithms.

Technically, a turn-based stochastic game (SG) is defined
as a directed graph where every vertex is either stochastic or
belongs to one of the two players. Further, there is a fixed probability
distribution over the outgoing transitions of every stochastic 
vertex. A \emph{play} of the game is initiated by putting a token
on some vertex. Then, the token is moved from vertex to vertex
by the players or randomly. A \emph{strategy} specifies how a player
should play. In general, a strategy may depend on the sequence of vertices
visited so far (we say that the strategy is \emph{history-dependent (H)}),
and it may specify a probability distribution over the outgoing transitions
of the currently visited vertex rather than a single outgoing transtion
(we say that the strategy is \emph{randomized (R)}). Strategies that
do not depend on the history of a play are called \emph{memoryless (M)},
and strategies that do not randomize (i.e., select a single outgoing 
transition) are called \emph{determinisctic (D)}. Thus, we obtain the
MD, MR, HD, and HR strategy classes, where HR are unrestricted strategies
and MD are the most restricted memoryless deterministic strategies.  

A \emph{game objective} is usually specified by a \emph{payoff function}
which assigns some real value to every run (infinite path) in the
game graph. The aim of Player~$\Box$ is to \emph{maximize} the expected
payoff, while Player~$\Diamond$ aims at \emph{minimizing} it. 
It has been shown in \cite{MS:stochastic-games-determinacy} that
for \emph{bounded} and \emph{Borel} payoff functions,
Martin's determinacy result for Blackwell 
games~\cite{Martin:Blackwell-determinacy} implies that
\begin{equation}
\label{eq-determinacy}
   \adjustlimits\sup_{\sigma \in \HR_{\Box}} \inf_{\pi \in \HR_{\Diamond}} 
       \Exp^{\sigma,\pi}_v[\Payoff]
   \quad = \quad
   \adjustlimits\inf_{\pi \in \HR_{\Diamond}} \sup_{\sigma \in \HR_{\Box}} 
       \Exp^{\sigma,\pi}_v[\Payoff]
\end{equation}
where $\HR_{\Box}$ and $\HR_{\Diamond}$ are the classes of HR strategies for
Player~$\Box$ and Player~$\Diamond$, respectively.
Hence, every vertex $v$ has a \emph{HR value} $\val_{\HR}(v)$ specified 
by~\eqref{eq-determinacy}. A HR strategy is \emph{optimal} if it achieves 
the outcome $\val_{\HR}(v)$ or better against every strategy of the other 
player. In general,  optimal strategies are not guaranteed to exist,
but~\eqref{eq-determinacy} implies that both players have 
\emph{$\varepsilon$-optimal} HR strategies for every 
$\varepsilon > 0$ (see Section~\ref{sec-prelim} for precise definitions).

The determinacy results 
of~\cite{Martin:Blackwell-determinacy,MS:stochastic-games-determinacy} cannot
be applied to \emph{unbounded} payoff functions,
i.e., these results do not imply 
that \eqref{eq-determinacy} holds if $\Payoff$ is unbounded,
and they do not say anything about the existence of a value for 
restricted strategy classes such as MD or MR. 
In the context of performance analysis and controller synthesis, 
these questions rise naturally; in some cases, the players cannot 
randomize or remember the history of a play, and some of the
studied payoff functions are not bounded. In this paper, we study
these issues for the \emph{total accumulated reward} payoff function
and \emph{infinite-state} games.

The total accumulated reward payoff function, denoted by $\Acc$, 
is defined as follows. Assume that every vertex $v$ is assigned a fixed
non-negative reward $r(v)$. Then $\Acc$ assigns to every run the sum
of rewards all vertices visited along the run. Obviously, $\Acc$ is
unbounded in general, and may even take the $\infty$ value.  A
special case of total accumulated reward is \emph{termination time},
where all vertices  are assigned reward~$1$, except for terminal 
vertices that are assigned reward~$0$ (we
also assume that the only outgoing transition of every terminal
vertex~$t$ is a self-loop on~$t$). Then, $\Exp^{\sigma,\pi}_v[\Acc]$
corresponds to the expected termination time under the strategies
$\sigma,\pi$.  Another special (and perhaps simplest) case of total
accumulated reward is \emph{reachability}, where the target vertices
are assigned reward~$1$ and the other vertices have zero reward 
(here we assume that every target vertex has a single outgoing 
transition to a special state $s$ with zero reward, where 
$s \tran{} s$ is the only outgoing transition of $s$). Although the 
reachability payoff is bounded, some of our negative results about 
the total accumulated reward hold even for
reachability (see below).

The reason for considering infinite-state games is that many
recent works study various algorithmic problems for games over
classical automata-theoretic models, such as pushdown automata
\cite{EY:RMC-RMDP,EY:RMDP-efficient,EY:RSCG,EWY:RSG-Positive-Rewards,%
BBKO:BPA-games-reachability-IC,BBFK:BPA-games-reachability-IC}, lossy
channel systems \cite{BBS:lossy-games-reach,%
AHAMS:Stochastic-games-lossy}, one-counter automata
\cite{BBEKW:OC-MDP,BBE:OC-games,BBEK:OC-games-termination-approx}, 
or multicounter automata
\cite{FJLS:multi-energy-games,BJK:eVASS-games,BCKN:consumption-games,%
Kucera:multicounter-games,CHDHR:energy-mean-payoff,%
BFLMS:weighted-automata-inf-runs}, which are finitely
representable but the underlying game graph is infinite and sometimes
even infinitely-branching (see, e.g., 
\cite{BJK:eVASS-games,BCKN:consumption-games,Kucera:multicounter-games}).
Since the properties of finite-state games do \emph{not} carry over
to infinite-state games in general (see, e.g., \cite{Kucera:games-chapter}),  
the above issues need to be revisited and clarified explicitly, which
is the main goal of this paper.

\textbf{Our contribution:} We consider general
infinite-state games, which may contain vertices with infinitely
many outgoing transitions, and 
$\Diamond$-finitely-branching games, where every vertex of $V_\Diamond$ 
has finitely many outgoing transitions, with the total accumulated 
reward objective. For \emph{general} games, we show the following:
\begin{itemize}
\item Every vertex has both a HR and a HD value, and these values are 
  equal\footnote{For a given strategy type $T$ (such as MD or MR), 
  we say that a vertex $v$ has a \emph{$T$~value} if 
  $\textstyle\sup_{\sigma \in T_{\Box}} \inf_{\pi \in T_{\Diamond}} 
       \Exp^{\sigma,\pi}_v[\Payoff]
   \ = \  
   \textstyle\inf_{\pi \in T_{\Diamond}} \sup_{\sigma \in T_{\Box}} 
       \Exp^{\sigma,\pi}_v[\Payoff]$, where $T_{\Box}$ and $T_{\Diamond}$
   are the classes of all $T$ strategies for Player~$\Box$ and 
   Player~$\Diamond$, respectively.}. 
\item There is a vertex $v$ of a game $G$ with reachability objective
  such that $v$ has neither MD nor MR value. Further, the game
  $G$ has only one vertex (belonging to Player~$\Diamond$) with
  infinitely many outgoing transitions.
\end{itemize}
It follows from previous works (see, e.g., 
\cite{BBFK:BPA-games-reachability-IC,Kucera:games-chapter}) 
that optimal strategies in general games may not exist, and even if they
do exist, they may require infinite memory. Interestingly, we observe
that an optimal strategy for Player~$\Box$ (if it exists) may also 
require randomization in some cases.

For \emph{$\Diamond$-finitely-branching} games, we prove the following results:
\begin{itemize}
\item Every vertex has a HR, HD, MR, and MD~value, and all of these values
  are equal.
\item Player~$\Diamond$ has an optimal MD strategy in every vertex.
\end{itemize}
It follows from the previous works that Player~$\Box$ may not have
an optimal strategy and even if he has one, it may require infinite
memory. Let us note that in finite-state games, both players have
optimal MD strategies (see, e.g., \cite{FV:book}).

Our results are obtained by generalizing the arguments for reachability
objectives presented in \cite{BBFK:BPA-games-reachability-IC}, but there are
also some new observations based on original ideas and new 
counterexamples. In particular, this applies to the existence 
of a HD value and the non-existence of MD and MR values in general games.


\section{Preliminaries}
\label{sec-prelim}

In this paper, the sets of all positive integers, non-negative
integers, rational numbers, real numbers, and non-negative real
numbers are denoted by $\Nset$, $\Nset_0$, $\Qset$, $\Rset$, and
$\Rset^{\geq 0}$, respectively. We also use $\Rset^{\geq 0}_\infty$
to denote the set $\Rset^{\geq 0} \cup \{\infty\}$, where $\infty$
is treated according to the standard conventions. For all 
$c \in \Rset^{\geq 0}_\infty$ and $\varepsilon \in [0,\infty)$, we define
the \emph{lower} and \emph{upper} $\eps$-approximation of~$c$,
denoted by $c\ominus\eps$ and $c\oplus\eps$, respectively, as follows:
\[
\begin{array}{rcll}
  c \oplus \eps & = &  c + \eps \quad & 
      \mbox{for all $c \in \Rset^{\geq 0}_\infty$ 
            and $\varepsilon \in [0,\infty)$,}\\
  c \ominus\eps  & = & c - \eps  & 
      \mbox{for all $c \in \Rset^{\geq 0}$ 
            and $\varepsilon \in [0,\infty)$,}\\
  \infty \ominus\eps & = & 1/\eps &
      \mbox{for all $\varepsilon \in (0,\infty)$,}\\
  \infty \ominus 0 &\ =\ & \infty\, .
\end{array}
\]
Given a set $V$, the elements of $(\Rset^{\geq 0}_\infty)^V$ are written as vectors
$\vec{x},\vec{y},\ldots$, where $\vec{x}_v$ denotes the $v$-component
of $\vec{x}$ for every $v \in V$. The standard component-wise
ordering on $(\Rset^{\geq 0}_\infty)^V$ is denoted by $\sqsubseteq$.


For every finite or countably infinite set $M$, a binary relation 
${\to} \subseteq M \times M$ is \emph{total} if for every
$m \in M$ there is some $n \in M$ such that $m \to n$.
A \emph{finite path} in $\M = (M,{\to})$ is 
a finite sequence $w = m_0,\ldots,m_k$ such that 
$m_i \to m_{i+1}$ for every $i$, where $0 \leq i < k$. 
The \emph{length} of~$w$, i.e., the number of transitions performed 
along~$w$, is denoted by $|w|$.
A \emph{run} in  $\M$ is an infinite sequence 
$\omega = m_0,m_1,\ldots$ every finite prefix of which is 
a path. We also use $\omega(i)$ to denote the element
$m_i$ of $\omega$, and $\omega_i$ to denote the run $m_i,m_{i+1},\ldots$ 
Given $m,n \in M$, we say that $n$ is \emph{reachable} from $m$,
written $m \to^* n$, if there is a finite path 
from $m$ to $n$. The sets of all
finite paths and all runs in $\M$ are denoted by 
$\fpath(\M)$ and $\run(\M)$, respectively. For every finite path
$w$, we use $\run(\M,w)$ and $\fpath(\M,w)$ to denote the set of all runs and finite paths, respectively, prefixed by~$w$.
If $\M$ is clear from the context, we write just $\run$, $\run(w)$, $\fpath$ and $\fpath(w)$
instead of $\run(\M)$, $\run(\M,w)$, $\fpath(\M)$ and $\fpath(\M,w)$, respectively. 

Now we recall basic notions of probability theory. 
Let $A$ be a finite or countably infinite set. A 
\emph{probability distribution}
on $A$ is a function $f : A \rightarrow \Rset^{\geq 0}$ such that
\mbox{$\sum_{a \in A} f(a) = 1$}. A distribution $f$ is \emph{rational}
if $f(a) \in \Qset$ for every $a \in A$,
\emph{positive} if $f(a) > 0$ for every $a \in A$, \emph{Dirac}
if $f(a) = 1$ for some $a \in A$, and \emph{uniform} if $A$ is
finite and $f(a) = \frac{1}{|A|}$ for every $a \in A$.
A \emph{$\sigma$-field} over a set $X$ is a set $\calF \subseteq 2^X$
that includes $X$ and is closed under complement and countable union. 
A \emph{measurable space} is a pair $(X,\calF)$ where $X$ is a
set called \emph{sample space} and $\calF$ is a $\sigma$-field over $X$. 
A \emph{probability measure} over a measurable space
$(X,\calF)$ is a function $\calP : \calF \rightarrow \Rset^{\geq 0}$
such that, for each countable collection $\{X_i\}_{i\in I}$ of pairwise
disjoint elements of $\calF$, $\calP(\bigcup_{i\in I} X_i) = 
\sum_{i\in I} \calP(X_i)$, and moreover $\calP(X)=1$. A 
\emph{probability space} is a triple $(X,\calF,\calP)$ where 
$(X,\calF)$ is a measurable space and $\calP$
is a probability measure over $(X,\calF)$.

\begin{definition}
A \emph{stochastic game} is a tuple 
$G = (V,\tran{},(V_{\Box},V_\Diamond,V_{\bigcirc}),\Prob)$ where
$V$ is a finite or countably infinite set of \emph{vertices}, ${\tran{}}
\subseteq V \times V$ is a total \emph{transition relation},
$(V_{\Box},V_\Diamond,V_{\bigcirc})$ is a partition of $V$, and $\Prob$ is a
\emph{probability assignment} which to each $v \in V_{\bigcirc}$
assigns a positive probability distribution on the set of its outgoing
transitions. We say that $G$ is
\emph{$\Diamond$-finitely-branching} if for each $v \in V_{\Diamond}$ 
there are only finitely many $u \in V$ such that $v \tran{} u$. 
\end{definition}

\paragraph{\bfseries Strategies.}
A stochastic game $G$ is played by two players, $\Box$ and $\Diamond$,
who select the moves in the vertices of $V_{\Box}$ and $V_{\Diamond}$,
respectively. Let $\odot \in \{\Box,\Diamond\}$.
A \emph{strategy} for Player~$\odot$ in $G$ is a 
function which to each finite path in $G$ ending a vertex $v \in V_{\odot}$
assigns a probability distribution on the set of outgoing transitions of~$v$.
We say that a strategy 
$\tau$ is \emph{memoryless (M)} if $\tau(w)$ depends just on the last
vertex of $w$, and \emph{deterministic (D)} if it returns a Dirac
distribution for every argument. Strategies that are not
necessarily memoryless are called \emph{history-dependent (H)}, and
strategies that are not necessarily deterministic are called
\emph{randomized (R)}. Thus, we obtain the MD, MR, HD, and HR
\emph{strategy types}. The set of all strategies for Player~$\odot$
of type $T$ in a game $G$ is denoted by $T^G_{\odot}$, or just
by $T_{\odot}$ if $G$ is understood (for example, 
$\MR_{\Box}$ denotes the set of all MR strategies for Player~$\Box$).  

Every pair of strategies $(\sigma,\pi) \in \HR_\Box \times \HR_\Diamond$ 
and an initial vertex~$v$ determine a unique probability space
$(\run(v),\calF,\calP_v^{\sigma,\pi})$, where $\calF$ is the $\sigma$-field 
over $\run(v)$ generated by all $\run(w)$ such that 
$w$~starts with~$v$, and $\calP_v^{\sigma,\pi}$ is the unique
probability measure such that for every finite path 
$w = v_0,\ldots ,v_k$ initiated in $v$ we have that
$\calP_v^{\sigma,\pi}(\run(w)) = \Pi_{i=0}^{k-1} x_i$, where
$x_i$ is the probability of $v_{i} \tran{} v_{i+1}$
assigned either by  $\sigma(v_0,\ldots,v_{i})$, 
$\pi(v_0,\ldots,v_{i})$, or $\Prob(v_{i})$, depending 
on whether $v_{i}$ belongs to $V_{\Box}$, $V_\Diamond$, or 
$V_{\bigcirc}$, respectively (in the case when $k=0$, i.e., $w = v$, we
put $\calP_v^{\sigma,\pi}(\run(w)) = 1$). 

\paragraph{\bfseries Determinacy, optimal strategies.}

In this paper, we consider games with the \emph{total accumulated
reward} objective and \emph{reachability} objective, where the latter
is understood as a restricted form of the former (see below).

Let $r : V \rightarrow \Rset^{\geq 0}$ be a \emph{reward function},
and $\Acc : \run \rightarrow \Rset_{\infty}^{\geq 0}$ a function
which to every run~$\omega$ assigns the \emph{total accumulated
reward} $\Acc(\omega) = \sum_{i=0}^{\infty} r(\omega(i))$. 
%
Let $T$ be a strategy type. We say that a vertex $v \in V$
\emph{has a $T$-value} in $G$ if
\[
   \adjustlimits\sup_{\sigma \in T_{\Box}} \inf_{\pi \in T_{\Diamond}} 
       \Exp^{\sigma,\pi}_v[\Acc]
   \quad = \quad
   \adjustlimits\inf_{\pi \in T_{\Diamond}} \sup_{\sigma \in T_{\Box}} 
       \Exp^{\sigma,\pi}_v[\Acc]
\]
where $\Exp^{\sigma,\pi}_v[\Acc]$ denotes the expected value of $\Acc$ in
$(\run(v),\calF,\calP_v^{\sigma,\pi})$. If $v$ has a \mbox{$T$-value}, 
then $\val_T(v,r,G)$ (or just $\val_T(v)$ if $G$ and $r$ are clear
from the context) denotes the \emph{$T$-value of~$v$} defined by this equality.

Let $\G$ be a class of games. If every vertex of every $G \in \G$
has a $T$-value for every reward function, we say that $\G$ is 
\emph{$T$-determined}. Note that $\Acc$ is generally not bounded, 
and therefore we cannot directly apply the results of 
\cite{Martin:Blackwell-determinacy,MS:stochastic-games-determinacy}
to conclude that the class of all games is HR-determined. Further, 
these results do not say anything about 
determinacy for the other strategy types even for bounded objective 
functions.

If a given vertex $v$ has a $T$-value, we can define the notion
of $\varepsilon$-optimal $T$~strategy for both players.

\begin{definition}
\label{def-eps-opt}
  Let $v$ be a vertex which has a $T$-value, and let 
  $\varepsilon \geq 0$. We say that 
  \begin{itemize}
  \item  $\sigma \in T_{\Box}$ is \emph{$\varepsilon$-$T$-optimal}
    in $v$ if $\Exp^{\sigma,\pi}_v[\Acc] \geq \val_T(v) \ominus \varepsilon$ 
    for all $\pi \in  T_{\Diamond}$;
  \item $\pi \in T_{\Diamond}$ is \emph{$\varepsilon$-$T$-optimal}
    in $v$ if 
    $\Exp^{\sigma,\pi}_v[\Acc] \leq \val_T(v) \oplus \varepsilon$ for all
    $\sigma \in  T_{\Box}$.
  \end{itemize}
  A $0$-$T$-optimal strategy is called \emph{$T$-optimal}.
\end{definition}

In this paper we also consider \emph{reachability} objectives,
which can be seen as a restricted form of the total accumulated
reward objectives introduced above. A ``standard'' definition of
the reachability payoff function looks as follows: We fix
a set $R \subseteq V$ of \emph{target} vertices, and define
a function $\Reach : \run \rightarrow \{0,1\}$ which to every 
run assigns either $1$ or $0$ depending on whether or not
the run visits a target vertex. Note that 
$\Exp^{\sigma,\pi}_v[\Reach]$ is the \emph{probability} of visiting a 
target vertex in the corresponding play of~$G$. Obviously, if we
assign reward~$1$ to the target vertices and $0$ to the others,
and replace all outgoing transitions of target vertices with a single
transition leading to a fresh stochastic vertex $u$ with reward~$0$ and 
only one transition $u \tran{} u$, then $\Exp^{\sigma,\pi}_v[\Reach]$
in the original game is equal to $\Exp^{\sigma,\pi}_v[\Acc]$ in the modified
game. Further, if the original game was $\Diamond$-finitely-branching or finite,
then so is the modified game. Therefore, all ``positive'' results 
about the total accumulated reward objective
(e.g., determinacy, existence of \mbox{$T$-optimal} strategies, etc.) 
achieved in this paper
carry over to the reachability objective, and all ``negative''
results about reachability carry over to the total accumulated reward.


\section{Results}
\label{sec-results}

Our main results about the determinacy of general stochastic games 
with the total accumulated reward payoff function are summarized 
in the following theorem:

\begin{theorem}
  \label{thm:general-determinacy}
  Let $\G$ be the class of all games. Then 
  \begin{itemize}
  \item[a)] $\G$ is both HR-determined and HD-determined. Further, for
    every vertex $v$ of every $G \in \G$ and every reward function $r$
    we have that $\val_{\HR}(v) = \val_{\HD}(v)$.
  \item[b)] $\G$ is neither MD-determined nor MR-determined, and these results
    hold even for reachability objectives.
  \end{itemize}
\end{theorem}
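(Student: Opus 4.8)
The plan is to prove part~(a) through a value‑function characterisation, thereby side‑stepping the unboundedness of $\Acc$ (so that Martin's theorem is not needed). First I would introduce the Bellman operator $\mathcal{T}$ on $(\Rni)^V$ by setting, for $f:V\to\Rni$ and ranging $u$ over the successors of $v$, $(\mathcal{T}f)(v)=r(v)+\sup_u f(u)$, $r(v)+\inf_u f(u)$, or $r(v)+\sum_u \Prob(v)(u)\,f(u)$ according as $v\in V_\Box$, $v\in V_\Diamond$, or $v\in V_{\bigcirc}$. Since $r\geq 0$ and $\mathcal{T}$ is monotone on the complete lattice $\bigl((\Rni)^V,\vleq\bigr)$, Knaster--Tarski gives a least fixed point $g^{*}$, obtained by iterating $\mathcal{T}$ from $\vec 0$ \emph{transfinitely}: I do not assume $\omega$‑continuity, because at a $V_\Diamond$‑vertex an infimum need not commute with the supremum of an increasing chain, so finite‑horizon approximants $\mathcal{T}^n\vec 0$ may underestimate the value (a run can defer reward past any fixed horizon at an infinitely‑branching $\Diamond$‑vertex, yet collect it eventually). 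The target equality is $\val_{\HR}(v)=\val_{\HD}(v)=g^{*}(v)$.

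The two matching inequalities are proved by potential (martingale) arguments applied to $\Phi_t=\bigl(\text{reward accumulated before step }t\bigr)+g^{*}(\omega(t))$, and this is where the work lies. For the \emph{upper} bound I would let Player~$\Diamond$ play greedily with respect to $g^{*}$, at each $\Diamond$‑vertex choosing a successor within slack $\eps/2^{k}$ of the infimum on his $k$‑th move (this history‑dependent shrinking slack is what copes with non‑attained infima at infinitely‑branching vertices). Since $g^{*}=\mathcal{T}g^{*}$, the process $\Phi_t$ is then a supermartingale up to the summable slack under every $\sigma\in\HR_\Box$; dropping the non‑negative residual $g^{*}(\omega(t))$ and letting the horizon tend to infinity via monotone convergence yields $\Exp^{\sigma,\pi}_v[\Acc]\leq g^{*}(v)\oplus\eps$ for all $\sigma$, with $\pi$ deterministic and history‑dependent. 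Symmetrically, a greedy HD strategy for Player~$\Box$ makes $\Phi_t$ a submartingale, but now the residual $g^{*}(\omega(t))$ is added on the \emph{wrong} side, so to conclude $\Exp^{\sigma,\pi}_v[\Acc]\geq g^{*}(v)\ominus\eps$ one must show this residual future value is actually converted into collected reward. I expect this \emph{lower bound for the maximiser} to be the main obstacle of part~(a): it is exactly the point where the infinite horizon is needed (finite horizons give only $\sup_n\mathcal{T}^n\vec 0\vleq g^{*}$, possibly strictly), and it is also where $g^{*}(v)=\infty$ is treated, by arguing that then $\Box$ can force arbitrarily large finite accumulated reward. Once both bounds hold, the witnessing strategies are HD yet beat every HR opponent, so weak duality closes $g^{*}\vleq \val^{-}_{\HD}\vleq \val^{-}_{\HR}\vleq \val^{+}_{\HR}\vleq g^{*}$ and likewise for the HD values, giving simultaneously HR‑ and HD‑determinacy with the common value $g^{*}$.

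For part~(b) the plan is to build a single reachability game $G$ with exactly one infinitely‑branching vertex $d\in V_\Diamond$ and a vertex $v$ having neither an MD‑ nor an MR‑value. I would embed a matrix game $M(i,j)$ whose $\sup\inf$ and $\inf\sup$ disagree even under mixed strategies; a threshold payoff $M(i,j)=\mathbf{1}[j\geq i]$ is the prototype, for which $\sup_j\inf_i M=0$ while $\inf_i\sup_j M=1$, and the gap survives randomisation because a probability distribution on $\Nset$ has vanishing tails, i.e.\ the minimiser's mixed‑strategy space is not tight. The vertex $d$ supplies the minimiser's pure strategies $i\in\Nset$ directly, whereas the maximiser's unboundedly many effective responses must be produced by a \emph{finitely}‑branching gadget—a $\Box$‑controlled loop whose randomised memoryless use yields a geometrically distributed response—arranged so that no single loop is a degenerate sure win. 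I would then compute the four quantities $\sup\inf$ and $\inf\sup$ for both MD and MR at $v$ and exhibit the strict gap.

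The main obstacle in~(b) is precisely this realisation: keeping \emph{only} $d$ infinitely‑branching while defeating \emph{both} the MD and the MR orders in one game, since a maximiser restricted to a binary choice can always neutralise an MD gap by randomising (a parity example shows this), so the gadget must grant the maximiser genuinely unbounded—yet finitely‑branching and non‑trivial—power. Once the gap is verified, the reduction of Section~\ref{sec-prelim} transfers it from reachability to the accumulated‑reward objective, and the root cause in both orders is the same non‑compactness that the history‑dependent shrinking‑slack strategies of part~(a) circumvent but memoryless strategies provably cannot.
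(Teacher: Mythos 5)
Your part~(a) setup coincides with the paper's: the Bellman operator is exactly the paper's $L$, the least fixed point $\vec{K}$ is obtained by Knaster--Tarski with \emph{transfinite} iteration (your observation that $\omega$-continuity fails at infinitely-branching $\Diamond$-vertices is exactly why the paper only proves Kleene/Scott-continuity in the $\Diamond$-finitely-branching case), and your minimizer strategy with shrinking slack $\eps/2^{k}$ plus monotone convergence is precisely the paper's Lemma~\ref{lem:eps-min}. The genuine gap is the maximizer's side, which you correctly flag as the main obstacle but then leave open --- and your proposed fix (greedy w.r.t.\ $g^{*}$ plus a submartingale argument) really does fail, not just technically: take Max-owned vertices $a_{0},a_{1},\dots$ with $a_{n}\tran{}a_{n+1}$ and $a_{n}\tran{}b_{n}$, where $b_{n}$ is stochastic and reaches the target with probability $1-2^{-n}$; then $g^{*}(a_{n})=1$ for all $n$, and always moving right is greedy with \emph{zero} slack consumed, yet collects payoff $0$. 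So the submartingale property holds while the residual $g^{*}(\omega(t))$ is never converted, and no choice of slack schedule w.r.t.\ the fixed point itself repairs this. The paper's missing idea is to play against the \emph{transfinite approximants} rather than against $g^{*}$: Lemma~\ref{lem:ord-labeling} equips every finite path $w$ with an ordinal label $d(w)\leq\alpha$ that strictly decreases along the play until it reaches $0$, and $\sigma_{\eps}$ chooses a successor nearly achieving $L^{d(wu)}(\vec{0})_{u}$; a transfinite induction (Lemma~\ref{lem:eps-HD-strat}) then shows that the reward collected \emph{before the label hits $0$} --- at which point the approximant is $\vec{0}$, so nothing remains ``promised'' --- is at least $L^{\alpha}(\vec{0})_{v}-\eps$. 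This well-foundedness is what forces conversion of future value into collected reward. Moreover the case $g^{*}(v)=\infty$ needs its own bookkeeping (the paper's $A^{wu}_{\eps}$, $B^{wu}_{\eps}$ quantities with the $F(w)$ correction, targeting $1/\eps$ in accordance with $\infty\ominus\eps$); your one-line remark that $\Box$ ``can force arbitrarily large finite reward'' does not substitute for it, since the infinite- and finite-value regions interleave along a single play. Without the decreasing-ordinal device or an equivalent, your argument establishes only the minimizer's half of part~(a): existence of $\val_{\HR}$ and $\eps$-optimal HD strategies for $\Diamond$, but not Lemma~\ref{lem:eps-HD-opt}, hence neither HD-determinacy nor $\val_{\HR}(v)=\val_{\HD}(v)$.

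For part~(b) your plan is essentially the paper's counterexample (Fig.~\ref{fig-determ}) in abstract form: Min's single infinitely-branching vertex selects a ``difficulty'' $x$ realized as hitting probability $2^{-x}$, Max's unbounded pure responses are supplied by an infinite but finitely-branching row choosing $n\mapsto 2^{-n}$, and a restart loop makes memoryless play a repeated race in which $\Exp^{\sigma,\pi}_{v}[\Reach]\leq (1-p_{s})p_{t}/p_{s}$ for the order favoring Min and $\geq 1-\eps$ for the other order; the vanishing-tails argument defeats MR exactly as you describe. One small correction: in the paper's game Max needs no randomized gadget to produce his responses --- the infinite row of binary-choice vertices already gives all pure column choices to an MD strategy, and randomization is only something the construction must additionally \emph{defeat} (which the aggregated per-round probabilities $p_{s}=\sum_{n}\sigma(n)2^{-n}>0$ handle). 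So part~(b) is the right sketch; part~(a) has the concrete missing idea identified above.
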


\noindent 
An optimal strategy for Player~$\Box$ does not necessarily exist,
even if~$G$ is a game with a reachability payoff function
such that $V_\Diamond = \emptyset$ and every vertex of $V_\Box$ has
at most two outgoing transitions (see, 
e.g.,~\cite{BBFK:BPA-games-reachability-IC,Kucera:games-chapter}).
In fact, it suffices to consider the vertex~$v$ of Fig.~\ref{fig-determ}
where the depicted game is modified by replacing the vertex $u$ with
a stochastic vertex $u'$, where $u' \tran{} u'$ is the
only outgoing transition of~$u'$, and $u'$ is the only target vertex 
(note that all vertices in the first two rows
become unreachable and can be safely deleted). Clearly, 
$\val_{\HR}(v) = 1$, but Player~$\Box$ has no optimal strategy.

Similarly, an optimal strategy for Player~$\Diamond$ may not exist
even if $V_\Box = \emptyset$ \cite{BBFK:BPA-games-reachability-IC,%
Kucera:games-chapter}. To see this, consider the vertex~$u$ 
of Fig.~\ref{fig-determ}, where $t$ is the only target vertex and 
the depicted game is modified by
redirecting the only outgoing transition of~$p$ back to~$u$ (this
makes all vertices in the last two rows unreachable).
We have that $\val_{\HR}(u) = 0$, but Player~$\Diamond$
has no optimal strategy.

One may be also tempted to think that if Player~$\Box$ (or Player~$\Diamond$)
has \emph{some} optimal strategy, then he also has an optimal MD strategy.
However, optimal strategies generally require \emph{infinite memory} even for
reachability objectives (this holds for both players). Since the
corresponding counterexamples are not completely trivial, we refer to 
\cite{Kucera:games-chapter} for details. Interestingly, an optimal strategy
for Player~$\Box$ may also require \emph{randomization}. Consider
the vertex $v$ of Fig.~\ref{fig:MRopt}. 
Let $\sigma^* \in \MR_{\Box}$ be a strategy
selecting $v \tran{} q_n$ with probability~$1/2^n$. Since 
$V_{\Diamond} = \emptyset$, we have that
$\inf_{\pi \in \HR_{\Diamond}} \Exp^{\sigma^*,\pi}_v[\Acc]= \infty = \val_{\HR}(v)$.
However, for every $\sigma \in \HD_{\Box}$ we have that 
$\inf_{\pi \in \HR_{\Diamond}} \Exp^{\sigma,\pi}_v[\Acc] < \infty$.
 
\tikzstyle{dummy}=[font=\scriptsize,inner sep=5pt]
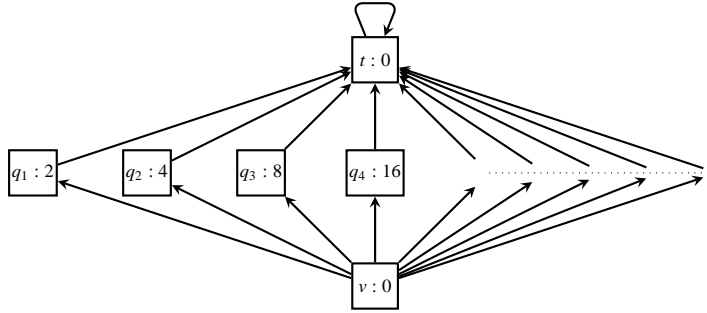
\begin{figure}[t]
 \centering
 \begin{tikzpicture}[x=1.5cm,y=1.5cm,font=\scriptsize]
    \node (v) at (0,0) [max] {$v: 0$};
  \foreach \x/\n/\r in {-3/1/2,-2/2/4,-1/3/8,0/4/16}{\node (q\n) at (\x,1) [max,inner sep=1pt,aspect=1] {$q_{\n}: \r$}; \draw [tran] (v) to (q\n);}
  \foreach \x/\n in {1/1,1.5/2,2/3,2.5/4,3/5}{\node (d\n) at (\x,1) [dummy] {}; \draw [tran] (v) to (d\n);}
  \node (t) at (0,2) [max] {$t: 0$};
  \foreach \n in {1,2,3,4}{\draw [tran] (q\n) to (t);}
  \foreach \x in {1,2,3,4,5}{\draw [tran] (d\x) to (t);}
  \draw [-,dotted] (1,1) to (3,1);
 \draw [tran,rounded corners] (t) -- +(-.2,0.5) -- +(.2,0.5) -- (t); 
 \end{tikzpicture}
 \caption{Player~$\Box$ has an $\MR$-optimal strategy in~$v$, 
   but no $\HD$-optimal strategy in~$v$. All vertices are labelled by pairs
   of the form \emph{vertex name}:\emph{reward}.}
 \label{fig:MRopt}
\end{figure}

For $\Diamond$-finitely-branching games, the situation is somewhat different,
as our second main theorem reveals.

\begin{theorem}
  \label{thm:finite-branching-determinacy}
  Let $\G$ be the class of all $\Diamond$-finitely-branching games. Then
  $\G$ is \mbox{HR-determined}, HD-determined, MR-determined, and MD-determined,
  and for every vertex $v$ of every $G \in \G$ and every reward 
  function $r$ we have that 
  \[
     \val_{\HR}(v) = \val_{\HD}(v) = \val_{\MR}(v) = \val_{\MD}(v) \,.
  \]
  Further, for every $G \in \G$ there exists a MD strategy for 
  Player~$\Diamond$ which is optimal in every vertex of~$G$.
\end{theorem}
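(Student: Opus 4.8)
The plan is to work throughout with the value function $V(v) := \val_{\HR}(v)$, which exists and satisfies $V(v)=\val_{\HD}(v)$ by Theorem~\ref{thm:general-determinacy}(a). First I would record that $V$ satisfies the Bellman optimality equations $V(v)=r(v)+\sup_{v\tran{}u}V(u)$ for $v\in V_\Box$, $V(v)=r(v)+\inf_{v\tran{}u}V(u)$ for $v\in V_\Diamond$, and $V(v)=r(v)+\sum_{v\tran{}u}\Prob(v)(v\tran{}u)\,V(u)$ for $v\in V_\bigcirc$; this follows from determinacy together with the one-step unfolding $\Acc(\omega)=r(\omega(0))+\Acc(\omega_1)$. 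The $\Diamond$-finite-branching hypothesis enters precisely here: at every $v\in V_\Diamond$ the infimum ranges over a finite set and is therefore \emph{attained} by some successor, and fixing such a minimizing successor for each $v\in V_\Diamond$ defines an MD strategy $\pi^*$ for Player~$\Diamond$.

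Next I would prove that $\pi^*$ is optimal, i.e.\ $\Exp^{\sigma,\pi^*}_v[\Acc]\le V(v)$ for every $v$ and every $\sigma\in\HR_\Box$. Writing $\Acc^{\leq n}(\omega)=\sum_{i=0}^{n-1}r(\omega(i))$, the Bellman (in)equalities show that $X_n:=\Acc^{\leq n}+V(\omega(n))$ is a supermartingale under $(\sigma,\pi^*)$: at $\Box$-vertices the expected one-step increment is $r(v)+\Exp_\sigma[V]-V(v)\le r(v)+\sup_{u}V(u)-V(v)=0$, at $\Diamond$-vertices it is exactly $0$ by the choice of $\pi^*$, and at stochastic vertices it is $0$ as well. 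Since $X_0=V(v)$, we get $\Exp[X_n]\le V(v)$, and $V\ge 0$ gives $\Exp[\Acc^{\leq n}]\le\Exp[X_n]$, so monotone convergence yields $\Exp^{\sigma,\pi^*}_v[\Acc]=\lim_n\Exp[\Acc^{\leq n}]\le V(v)$. This establishes both the optimal MD strategy for Player~$\Diamond$ and the upper bounds for all four types: since $\pi^*\in\MD_\Diamond\subseteq T_\Diamond$ and $T_\Box\subseteq\HR_\Box$ for every $T\in\{\MD,\MR,\HD,\HR\}$, we obtain $\inf_{\pi\in T_\Diamond}\sup_{\sigma\in T_\Box}\Exp^{\sigma,\pi}_v[\Acc]\le\sup_{\sigma\in T_\Box}\Exp^{\sigma,\pi^*}_v[\Acc]\le\sup_{\sigma\in\HR_\Box}\Exp^{\sigma,\pi^*}_v[\Acc]\le V(v)$.

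For the matching lower bounds it suffices to show that Player~$\Box$ has $\eps$-optimal MD strategies against \emph{all} HR strategies of Player~$\Diamond$, that is $\sup_{\sigma\in\MD_\Box}\inf_{\pi\in\HR_\Diamond}\Exp^{\sigma,\pi}_v[\Acc]=V(v)$. Indeed, for each $T$ we have $\MD_\Box\subseteq T_\Box$ and $T_\Diamond\subseteq\HR_\Diamond$, so $\sup_{\sigma\in T_\Box}\inf_{\pi\in T_\Diamond}\Exp^{\sigma,\pi}_v[\Acc]\ge\sup_{\sigma\in\MD_\Box}\inf_{\pi\in\HR_\Diamond}\Exp^{\sigma,\pi}_v[\Acc]=V(v)$, and together with the upper bounds of the previous paragraph this squeezes $\sup_{\sigma\in T_\Box}\inf_{\pi\in T_\Diamond}=\inf_{\pi\in T_\Diamond}\sup_{\sigma\in T_\Box}=V(v)$ for all four types, giving determinacy and the stated equality of values. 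To construct such a $\sigma$ I would, at each $v\in V_\Box$, choose a successor $u_v$ with $V(u_v)$ within a slack $\delta_v$ of $\sup_{v\tran{}u}V(u)$, apportioning the $\delta_v$ so that the total slack along any play stays below $\eps$; a symmetric submartingale computation then shows $\Exp^{\sigma,\pi}_v[\Acc^{\leq n}]\ge V(v)\ominus\eps-\Exp[V(\omega(n))]$ for every $\pi$.

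The main obstacle is exactly this last step. In contrast to the supermartingale argument, the nonnegativity of $V$ now points the wrong way, so one must control the remainder $\Exp[V(\omega(n))]$ and the accumulated slack \emph{simultaneously}, using only a memoryless (hence step-count-oblivious) choice of the $\delta_v$, uniformly over all adversary strategies $\pi$, and in a way compatible with the case $V(v)=\infty$ (where the target becomes $\infty\ominus\eps=1/\eps$). This is precisely the point at which the reachability arguments of~\cite{BBFK:BPA-games-reachability-IC} must be generalized; I would handle it by approximating $V$ from below by bounded-horizon values $V_n$, with $V(v)=\sup_n V_n(v)$, and by exploiting that the expected reward accumulated \emph{after} step $n$ tends to $0$ when $V(v)<\infty$ (and grows without bound as required when $V(v)=\infty$), which drives $\Exp[V(\omega(n))]$ below $\eps$ for a suitable~$n$.
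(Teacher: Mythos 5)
Your first two paragraphs are correct and essentially coincide with the paper's own route: the MD strategy $\pi^*$ for Player~$\Diamond$ picking value-minimizing successors (attainment via $\Diamond$-finite branching), its optimality via the one-step supermartingale/downward-induction bound (this is the paper's Lemma~\ref{lem:eps-min} with $\eps=0$), and the squeeze reducing all four determinacy claims to the existence of $\eps$-$\HR$-optimal MD strategies for Player~$\Box$ (Item~\ref{prop:finb-MD-box} of Proposition~\ref{prop:finb-MD}). The genuine gap is in that remaining step, and you have located it yourself but not closed it. A memoryless allocation of slacks $\delta_v$ cannot keep the slack ``along any play'' below $\eps$: an MD strategy revisits vertices, each visit of $v$ incurs $\delta_v$ afresh, and on runs visiting some $v\in V_\Box$ with $\delta_v>0$ infinitely often the accumulated slack diverges; there is no apportionment over a countably infinite vertex set that is positive where needed yet summable along all plays. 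Your proposed rescue is also unsound as stated: from $\Exp^{\sigma,\pi}_v[\Acc^{\leq n}]\geq \val_{\HR}(v)\ominus\eps-\Exp^{\sigma,\pi}_v[\val_{\HR}(\omega(n))]$ you would need $\liminf_n \Exp^{\sigma,\pi}_v[\val_{\HR}(\omega(n))]\leq\eps$ \emph{uniformly over all} $\pi$; but $\val_{\HR}(\omega(n))$ is the value at the current vertex (what could still be earned), not the reward that will actually be earned, so vanishing expected tail rewards does not bound it, and choosing ``a suitable $n$'' a posteriori cannot enter the definition of a step-count-oblivious strategy. The natural output of your bounded-horizon scheme is a strategy indexed by the remaining horizon or by path-dependent labels --- exactly the history-dependent construction of Lemma~\ref{lem:eps-HD-opt} --- not an MD one.

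The paper closes precisely this gap by discounting. It first proves your bounded-horizon approximation as Lemma~\ref{lem-max-bound}; note this is a second, essential use of $\Diamond$-finite branching, which makes the Bellman operator $L$ Scott-continuous so that $\vec{K}=L^{\omega}(\vec{0})$ by Kleene's theorem, whereas in general games transfinite iteration is needed and the approximation fails --- so $\val_{\HR}(v)=\sup_n$ ($n$-step value) must be proved, not assumed. Then for $\lambda$ close to $1$ the discounted value is within $\eps/2$ of $\val_{\HR}(v)$ (Lemma~\ref{lem:discount}), and for the \emph{discounted} payoff a memoryless choice with fixed per-step slack $\eps/(4\ell)$ is legitimate: only the first $\ell$ steps matter, the tail being bounded by $\lambda^{\ell}\max_v r(v)/(1-\lambda)<\eps/8$, which supplies exactly the uniform control of the remainder term that your submartingale argument lacks (and handles the $\val_{\HR}(v)=\infty$ case, where the target $\infty\ominus\eps=1/\eps$ is met once the $n$-step values grow large enough). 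Since $\Acc\geq\Acc^{\lambda}$ pointwise, the guarantee transfers back to the total reward. To repair your proof you need this detour through discounting or an equivalent device; the direct slack-apportionment does not go through.
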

An optimal strategy for Player~$\Box$ may not exist in 
$\Diamond$-finitely-branching
games, and even if it does exist, it may require infinite 
memory~\cite{Kucera:games-chapter}.

Theorems~\ref{thm:general-determinacy} 
and~\ref{thm:finite-branching-determinacy} are proven by a sequence of
lemmas presented below. 
For the rest of this section, we fix a stochastic game
$G=(V,\tran{},(V_{\Box},V_\Diamond,V_{\bigcirc}),\Prob)$ and a
reward function $r \colon V \rightarrow \Rset^{\geq 0}$. 
We start with the first part of Theorem~\ref{thm:general-determinacy}~(a), 
i.e., we show that every vertex has a HR-value. This is achieved
by defining a suitable Bellman operator $L$ and proving that the 
least fixed-point of $L$ is the tuple of all HR-values. More
precisely, let $L \colon (\Rset^{\geq 0}_\infty)^{V}
\rightarrow (\Rset^{\geq 0}_\infty)^{V}$, where $\vec{y} = L(\vec{x})$
is defined as follows: 
\[
 \vec{y}_v = \begin{cases}
                r(v) + \sup_{v \tran{} v'} \vec{x}_{v'} & \text{if } v \in V_{\Box}\\
		r(v) + \inf_{v \tran{} v'} \vec{x}_{v'} & \text{if } v \in V_{\Diamond}\\
		r(v) + \sum_{v \tran{} v'} \vec{x}_{v'}\cdot \Prob(v)(v,v') & \text{if } v \in V_{\bigcirc}.
                \end{cases}
\]
A proof of the following lemma can be found in Appendix~\ref{app-proofs}.
Some parts of this proof are subtle, and we also need to make several 
observations that are useful for proving the other results.

\begin{lemma}
\label{lem:hrdet}
The operator $L$ has the least fixed point $\vec{K}$ (w.r.t.{} $\vleq$)
and for every $v \in V$ we have that
\[
  \vec{K}_v \quad = \quad \adjustlimits \sup_{\sigma \in
  \HR_{\Box}} \inf_{\pi\in \HR_{\Diamond}} \Exp^{\sigma,\pi}_{v}[\Acc]
  \quad = \quad \adjustlimits\inf_{\pi\in \HR_{\Diamond}} \sup_{\sigma \in
  \HR_{\Box}} \Exp^{\sigma,\pi}_{v}[\Acc] \quad = \quad \val_{\HR}(v).
\]
Moreover, for every $\eps>0$ there is  $\pi_{\eps} \in \HD_{\Diamond}$
such that for every $v \in V$ we have that
$\sup_{\sigma \in \HR_{\Box}}\Exp^{\sigma,\pi_{\eps}}_v \leq \val_{\HR}(v)\oplus\eps$.
\end{lemma}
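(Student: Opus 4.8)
The plan is to establish the three claims of Lemma~\ref{lem:hrdet} in turn: first that $L$ has a least fixed point $\vec{K}$, second that $\vec{K}$ coincides with the HR-value (via the two-sided inequality that also yields the $\sup\inf = \inf\sup$ identity), and finally the uniform $\eps$-optimality of a single HD strategy for Player~$\Diamond$. For the first claim I would observe that $L$ is monotone with respect to $\vleq$ on the complete lattice $(\Rset^{\geq 0}_\infty)^V$: each of the three cases (sup over $\Box$-successors, inf over $\Diamond$-successors, and the convex combination for stochastic vertices) is monotone in $\vec{x}$, and the componentwise-ordered space of $\Rni$-vectors is a complete lattice. By the Knaster--Tarski theorem $L$ then has a least fixed point $\vec{K}$, which moreover equals $\bigsqcup_{n} L^n(\vec{0})$ provided $L$ is also $\sqcup$-continuous on ascending chains; I would verify continuity, noting the only delicate case is the stochastic one, where interchanging a supremum of an increasing chain with the (possibly infinite) sum is justified by the monotone convergence theorem.

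For the characterization $\vec{K}_v = \val_{\HR}(v)$, I would prove two inequalities. For the lower bound $\vec{K}_v \leq \sup_\sigma \inf_\pi \Exp^{\sigma,\pi}_v[\Acc]$, I would use the approximants $\vec{K}^{(n)} := L^n(\vec{0})$ and argue that $\vec{K}^{(n)}_v$ is a lower bound on what Player~$\Box$ can guarantee within the first $n$ steps, so that $\vec{K}_v = \sup_n \vec{K}^{(n)}_v$ is achievable in the limit; here the $\Box$-choices are selected greedily (up to a vanishing error at each sup, to handle infinite branching) to witness the value. For the upper bound $\inf_\pi \sup_\sigma \Exp^{\sigma,\pi}_v[\Acc] \leq \vec{K}_v$, I would construct a near-optimal $\Diamond$ strategy directly from the fixed point: at a $\Diamond$-vertex $v$ the strategy picks a successor $v'$ whose value $\vec{K}_{v'}$ is within a suitably small error of the infimum $\inf_{v \tran{} v'}\vec{K}_{v'}$. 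The standard submartingale/supermartingale argument then shows that along any play against this strategy, the quantity $\Acc$ accumulated so far plus the fixed-point value of the current vertex forms a process whose expectation is controlled by $\vec{K}_v$ up to the accumulated errors. Chaining $\sup_\sigma\inf_\pi \leq \inf_\pi\sup_\sigma$ (which always holds) with these two bounds forces all three expressions to equal $\vec{K}_v$.

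The final statement — a single $\pi_\eps \in \HD_{\Diamond}$ that is uniformly $\eps$-good from \emph{every} starting vertex simultaneously — is the part I would treat most carefully, and I expect it to be the main obstacle. The difficulty is that a naive construction allocates a per-step error budget, and to make the total error below $\eps$ uniformly in $v$ one must split the budget along the run so that the errors form a convergent series independent of the start. The plan is to fix, at each $\Diamond$-vertex, a deterministic choice of successor achieving the infimum up to an error that shrinks geometrically in the number of steps taken so far — for instance assigning error at most $\eps_k$ at the $k$-th transition with $\sum_k \eps_k$ bounded in terms of $\eps$. Because the strategy is history-dependent I may let the choice depend on the length of the current path, which gives the needed summability while keeping the strategy deterministic. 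I would then verify via the supermartingale estimate that for the resulting $\pi_\eps$ and \emph{any} $\sigma \in \HR_{\Box}$ and any $v$, $\Exp^{\sigma,\pi_\eps}_v[\Acc] \leq \vec{K}_v \oplus \eps = \val_{\HR}(v) \oplus \eps$. The subtlety worth flagging is the case $\vec{K}_v = \infty$, where the $\oplus\eps$ bound is vacuous, and the case where values of successors are infinite, which must be handled so that the error allocation at infinite-valued vertices does not spoil the estimate at finite-valued ones; matching the definitions of $\oplus$ and $\ominus$ given in the preliminaries is what makes these boundary cases go through cleanly.
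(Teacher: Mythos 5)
Your architecture matches the paper on parts one and three: Knaster--Tarski on the complete lattice $((\Rni)^V,\vleq)$ for existence of $\vec{K}$, the trivial middle inequality $\sup\inf\leq\inf\sup$, and a single HD strategy $\pi_\eps$ for Player~$\Diamond$ built from a per-step error budget shrinking geometrically in the path length (the paper uses $\eps/2^{|wu|+1}$, verified by exactly the supermartingale-style downward induction plus monotone convergence you describe, with $\vec{K}_v=\infty$ handled trivially; cf.\ Lemma~\ref{lem:eps-min}). However, your lower bound $\vec{K}_v \leq \sup_\sigma\inf_\pi \Exp^{\sigma,\pi}_v[\Acc]$ has a genuine gap: it rests on $\vec{K} = \sup_n L^n(\vec{0})$, i.e.\ on $\omega$-continuity of $L$, which is \emph{false} for the general games this lemma covers. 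You even misidentify the delicate case: at stochastic vertices the monotone convergence theorem does let the increasing supremum pass through the infinite sum, and at $\Box$ vertices two suprema commute; the case that breaks is the $\Diamond$ one, since an infimum over \emph{infinitely many} successors does not commute with suprema of increasing chains. Concretely, take a $\Diamond$-vertex $v$ with $r(v)=0$ and successors $u_1,u_2,\dots$, where $u_i$ starts a chain of $i$ zero-reward vertices followed by one vertex of reward $1$ and a zero-reward sink. Then $L^n(\vec{0})_{u_i}=0$ whenever $i\geq n$, so $L^{n+1}(\vec{0})_v=\inf_i L^n(\vec{0})_{u_i}=0$ for every $n$, yet $\vec{K}_v=1$ (Player~$\Diamond$ must commit to some $u_i$, after which the reward $1$ is unavoidable). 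Hence $\sup_n L^n(\vec{0})$ is not even a fixed point, and your $n$-step-game argument only yields $\sup_n L^n(\vec{0})_v \leq \sup_\sigma\inf_\pi \Exp^{\sigma,\pi}_v[\Acc]$, which is strictly weaker than what is needed. Infinitely-branching $\Diamond$-vertices are not a corner case here: the vertex $u$ of Fig.~\ref{fig-determ} is exactly of this kind.

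The paper closes this direction without any continuity claim: it shows that the vector $\vec{S}$ with $\vec{S}_v=\sup_\sigma\inf_\pi\Exp^{\sigma,\pi}_v[\Acc]$ is itself a fixed point of $L$, by a case analysis on the type of $v$ with one-step strategy surgery and contradiction arguments, whence $\vec{K}\vleq\vec{S}$ because $\vec{K}$ is the \emph{least} fixed point (showing $L(\vec{S})\vleq\vec{S}$ would already suffice). Alternatively, your approach can be repaired by transfinite induction, proving $L^\beta(\vec{0})\vleq\vec{S}$ for all ordinals $\beta$ --- but the successor step of that induction essentially amounts to the same pre-fixed-point verification. This distinction is structural in the paper: it works with $\vec{K}=L^\alpha(\vec{0})$ for a general ordinal $\alpha$ (Appendix~\ref{sec:app-HD}) and proves Scott-continuity of $L$, hence $\alpha=\omega$, only under the $\Diamond$-finitely-branching hypothesis (Appendix~\ref{app:nstep}), where finiteness of the $\Diamond$-successor sets is precisely what validates the inf/sup exchange. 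Apart from the unnecessary (and, as stated, unprovable) continuity claim in your part one, the rest of your proposal is sound.
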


To complete our proof of Theorem~\ref{thm:general-determinacy}~(a),
we need to show the existence of a \mbox{HD-value} in every vertex,
and demonstrate that HR and HD values are equal.
Due to Lemma~\ref{lem:hrdet}, for every $\varepsilon >0$ 
there is $\pi_\varepsilon \in \HD_{\Diamond}$ such that $\pi_\varepsilon$
is $\varepsilon$-$\HR$-optimal in every vertex.
Hence, it suffices to show the same for Player~$\Box$.
The following lemma is proved in Appendix~\ref{sec:app-HD}.

\begin{lemma}
\label{lem:eps-HD-opt}
  For every $\eps>0$, there is $\sigma_{\eps} \in \HD_{\Box}$ such that
  $\sigma_{\eps}$ is $\eps$-$\HR$-optimal in every vertex.
\end{lemma}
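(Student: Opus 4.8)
The plan is to dualise the construction behind Lemma~\ref{lem:hrdet}: there Player~$\Diamond$ secured the upper bound $\val_{\HR}(v)\oplus\eps$ with a deterministic strategy, and here Player~$\Box$ must secure the lower bound $\val_{\HR}(v)\ominus\eps$. Throughout I write $\vec{K}$ for the least fixed point of $L$, so that $\vec{K}_v=\val_{\HR}(v)$ by Lemma~\ref{lem:hrdet}, and I use the defining relations of $L$: at $v\in V_{\Box}$ we have $\vec{K}_v = r(v)+\sup_{v\tran{} v'}\vec{K}_{v'}$, at $v\in V_{\Diamond}$ we have $\vec{K}_v \le r(v)+\vec{K}_{v'}$ for every successor $v'$, and at $v\in V_{\bigcirc}$ we have $\vec{K}_v = r(v)+\sum_{v\tran{} v'}\vec{K}_{v'}\cdot \Prob(v)(v,v')$.

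First I would fix a summable tolerance sequence, e.g. $\eps_i = \eps\,2^{-i-1}$ with $\sum_{i\ge 0}\eps_i \le \eps$, and build $\sigma_\eps\in\HD_{\Box}$ greedily with a decreasing error budget. For a finite path $w$ ending in $v\in V_{\Box}$, let $i$ be the number of $\Box$-vertices already visited along $w$; this counter is what forces memory, hence an $\HD$ rather than an $\MD$ strategy. If $\sup_{v\tran{} v'}\vec{K}_{v'}<\infty$, let $\sigma_\eps(w)$ deterministically select, with respect to a fixed enumeration of the transitions, a successor $v'$ with $\vec{K}_{v'}\ge \sup_{v\tran{} v''}\vec{K}_{v''} - \eps_i$; if the supremum is $\infty$, select instead some $v'$ with $\vec{K}_{v'}\ge 1/\eps_i$. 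The point of the geometric budget is that the approximation errors incurred at the (possibly unattained) suprema at $\Box$-vertices are summable to at most $\eps$.

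Next, fix an arbitrary $\pi\in\HR_{\Diamond}$ and analyse the process on $(\run(v),\calP^{\sigma_\eps,\pi}_v)$. Put $X_k = \sum_{i<k} r(\omega(i)) + \vec{K}_{\omega(k)}$, and let $Y_k = X_k + T_k$ where $T_k\le\eps$ is the total tolerance spent at $\Box$-moves strictly before step~$k$. The three fixed-point relations give a one-step drift bound: conditioned on the history up to step~$k$, the increment $X_{k+1}-X_k = r(\omega(k)) + \vec{K}_{\omega(k+1)} - \vec{K}_{\omega(k)}$ is at least $-\eps_i$ at a $\Box$-vertex by the greedy choice, at least $0$ at a $\Diamond$-vertex since $\vec{K}_{\omega(k)}\le r(\omega(k))+\vec{K}_{\omega(k+1)}$, and has conditional expectation $0$ at a $\bigcirc$-vertex by the averaging identity. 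Hence $\{Y_k\}$ is a non-negative submartingale with $\Exp[Y_k]\ge Y_0 = \vec{K}_v$, which rearranges (in the finite-value case) to $\Exp^{\sigma_\eps,\pi}_v[\sum_{i<k} r(\omega(i))] \ge \vec{K}_v - \eps - \Exp^{\sigma_\eps,\pi}_v[\vec{K}_{\omega(k)}]$; letting $k\to\infty$ and using monotone convergence for the non-negative partial sums yields a lower bound on $\Exp^{\sigma_\eps,\pi}_v[\Acc]$ up to the residual term $\liminf_k \Exp[\vec{K}_{\omega(k)}]$.

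The main obstacle is precisely this residual value-to-go term. Unlike the $\Diamond$-side of Lemma~\ref{lem:hrdet}, where $\val_{\HR}(v)=\infty$ makes the target bound $\val_{\HR}(v)\oplus\eps$ trivial, here $\val_{\HR}(v)=\infty$ forces $\Box$ to secure the genuine finite quantity $1/\eps$, and even for finite values one must rule out that $\vec{K}_{\omega(k)}$ escapes to infinity or stays trapped above $0$ in the limit. I would dispose of it by a case analysis combined with truncation: on runs with $\Acc(\omega)=\infty$ the required bound is immediate, so it suffices to treat the event of finite accumulated reward, where I would argue, using non-negativity of $r$ and the fixed-point equation, that the value $\vec{K}_{\omega(k)}$ is actually realised as collected reward, forcing $\liminf_k \vec{K}_{\omega(k)}=0$ along such runs. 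Passing this through Fatou's lemma, or equivalently applying the submartingale inequality to the truncated potentials $\min(\vec{K},N)$ together with a stopping time at which either the current value exceeds $N$ or enough reward has been collected, and then letting $N\to\infty$, kills the residual term and gives $\Exp^{\sigma_\eps,\pi}_v[\Acc]\ge \vec{K}_v\ominus\eps$; the infinite-value case is handled by the same submartingale steered, via the $\ge 1/\eps_i$ branch of $\sigma_\eps$, towards accumulating reward beyond $1/\eps$. Since $\pi$ was arbitrary and $\vec{K}_v=\val_{\HR}(v)$, this is exactly $\eps$-$\HR$-optimality of $\sigma_\eps$ in every vertex.
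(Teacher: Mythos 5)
Your submartingale setup is sound as far as it goes: the one-step drift bounds at $\Box$-, $\Diamond$-, and $\bigcirc$-vertices all follow from the fixed-point relations, and your geometric tolerance budget $\eps_i=\eps 2^{-i-1}$ mirrors the paper's $\eps/2^{|wu|+1}$ bookkeeping. But there is a genuine gap at exactly the point you yourself flag as the main obstacle, and the repair you sketch does not work. Your claim that on runs with finite accumulated reward the greedy strategy forces $\liminf_k \vec{K}_{\omega(k)}=0$ is false, and neither Fatou nor truncation with stopping times can save it, because the failure mode is not escape of $\vec{K}_{\omega(k)}$ to infinity but perpetual procrastination at a bounded value. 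Concretely, take $\Box$-vertices $v_1,v_2,\dots$ with transitions $v_i \tran{} v_{i+1}$ and $v_i \tran{} u_i$, where the stochastic vertex $u_i$ moves to a target (reward $1$, then a zero-reward sink) with probability $1-2^{-i}$ and to the sink otherwise; all other rewards are $0$. Then $\vec{K}_{v_i}=1$ for every $i$, so at $v_i$ the successor $v_{i+1}$ \emph{attains} the supremum $\sup_{v_i\tran{} v''}\vec{K}_{v''}=1$ and hence satisfies your greedy condition $\vec{K}_{v'}\geq \sup_{v_i\tran{} v''}\vec{K}_{v''}-\eps_i$ for every tolerance; a fixed enumeration may therefore select $v_{i+1}$ forever, yielding $\Acc(\omega)=0$ while $\vec{K}_{\omega(k)}=1$ for all $k$. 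Greedy play against the least fixed point carries no progress information: the value promised by $\vec{K}$ need never be converted into collected reward. (Note the asymmetry with the minimizer's side of Lemma~\ref{lem:hrdet}, where the analogous induction needs only an \emph{upper} bound over finite horizons and no residual term arises; this is why the $\Diamond$-construction dualises so smoothly there but not here.)

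What your argument is missing is precisely the ingredient the paper's proof is built around: a progress measure. The paper does not approximate against $\vec{K}$ but against the transfinite iterates $L^{d(wu)}(\mathbf{0})$, where $d$ is an ordinal labeling of finite paths (Lemma~\ref{lem:ord-labeling}) with $d(v)=\alpha$ for $\vec{K}=L^{\alpha}(\mathbf{0})$, strictly decreasing along every play until it reaches $0$, and with a one-step condition relating level $d(wu)$ to level $d(wuu')$. The transfinite induction of Lemma~\ref{lem:eps-HD-strat} then shows that the reward collected \emph{before the label hits zero} is already at least $L^{\alpha}(\mathbf{0})_v - \eps$; in the counterexample above, the decreasing label forces $\sigma_\eps$ to drop down to some $u_j$ with $2^{-j}$ small within a horizon determined by the labeling, which is exactly what eliminates your residual term $\liminf_k \Exp[\vec{K}_{\omega(k)}]$. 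The same defect infects your infinite-value branch: choosing a successor with $\vec{K}_{v'}\geq 1/\eps_i$ again promises but does not realise reward, which is why the paper's Lemma~\ref{lem:ord-labeling-inf} needs the additional bookkeeping term $F(w)$ threaded through the induction. Repairing your proof means importing this ordinal machinery, at which point it becomes the paper's proof.
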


The next lemma proves Item~(b) of Theorem~\ref{thm:general-determinacy}.

\begin{lemma}
  Consider the vertex $v$ of the game shown in Fig.~\ref{fig-determ},
  where $t$ is the only target vertex and all probability distributions
  assigned to stochastic states are uniform. Then 
  \begin{itemize}
  \item[(a)] $\sup_{\sigma \in \MD_{\Box}} \inf_{\pi \in \MD_{\Diamond}} 
            \Exp^{\sigma,\pi}_v[\Reach] \ = \
         \sup_{\sigma \in \MR_{\Box}} \inf_{\pi \in \MR_{\Diamond}} 
            \Exp^{\sigma,\pi}_v[\Reach] \ = \ 0$;
  \item[(b)] $\inf_{\pi \in \MD_{\Diamond}} \sup_{\sigma \in \MD_{\Box}} 
            \Exp^{\sigma,\pi}_v[\Reach] \ = \ 
         \inf_{\pi \in \MR_{\Diamond}} \sup_{\sigma \in \MR_{\Box}} 
            \Exp^{\sigma,\pi}_v[\Reach] \ = \ 1$.
  \end{itemize}
\end{lemma}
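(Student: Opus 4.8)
The plan is to read off both equalities directly from the reachability probabilities induced by memoryless profiles in the game of Fig.~\ref{fig-determ}, exploiting the asymmetry created by which player must fix a memoryless strategy first. Since $0 \le \Reach \le 1$ pointwise, for part~(a) it suffices to show that for \emph{every} $\sigma \in \MR_{\Box}$ there is a response $\pi \in \MD_{\Diamond}$ with $\Exp^{\sigma,\pi}_v[\Reach] = 0$; this forces both inner infima (over $\MD_{\Diamond}$ and over $\MR_{\Diamond}$, as $\MD_{\Diamond} \subseteq \MR_{\Diamond}$) to be $0$, and both outer suprema are then $0$ because $\Reach \ge 0$. Symmetrically, for part~(b) it suffices to show that for \emph{every} $\pi \in \MR_{\Diamond}$ there is a response $\sigma \in \MD_{\Box}$ with $\Exp^{\sigma,\pi}_v[\Reach] = 1$; this makes every inner supremum at least~$1$, hence both outer infima equal~$1$ since $\Reach \le 1$. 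Producing a \emph{Dirac} response in each case lets a single argument cover the MD and MR statements simultaneously, so the two bullets collapse to these two response constructions.

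For part~(a) I would fix an arbitrary memoryless $\sigma \in \MR_{\Box}$ and use the fact that $\sigma$ assigns, at each Max vertex, a fixed history-independent distribution over outgoing transitions. Because Player~$\Diamond$ moves second in the order $\sup\inf$, Min may tailor its single choice at the infinitely-branching vertex $u \in V_{\Diamond}$ to this now-frozen behaviour: I would identify, for each such $\sigma$, one successor of $u$ from which the token, under $\sigma$, stays forever inside the set of non-target vertices with probability~$1$, so that $t$ is reached with probability exactly~$0$. The point to establish carefully is that randomization by Max does not help, since a memoryless $\sigma$ only convex-combines fixed per-vertex escape probabilities; the dodging branch therefore still exists among the infinitely many successors of $u$, and the resulting $\pi$ can be taken Dirac (hence in $\MD_{\Diamond}$).

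For part~(b) I would fix an arbitrary memoryless $\pi \in \MR_{\Diamond}$; now Min must commit, once and for all, to a single distribution over the successors of $u$, and this is exactly what Player~$\Box$, moving second, can exploit. I would construct a Dirac $\sigma \in \MD_{\Box}$ in the Max part of the game, tuned to the fixed $\pi$, that steers every play into $t$ with probability~$1$. The main obstacle, and the reason this needs the concrete structure rather than a soft minimax argument, is the MR case here: one must verify that a single memoryless distribution over the infinitely many branches of $u$ cannot avoid $t$ against a suitably chosen memoryless Max response, and that Max attains probability \emph{exactly}~$1$ (not merely values approaching~$1$). This relies on the particular transition and reward structure of Fig.~\ref{fig-determ}, where every branch Min can select leads, under Max's tuned memoryless play, surely to the target; the delicate part is checking that these per-branch requirements are jointly realizable by one history-independent $\sigma$, so that the second mover's advantage genuinely reverses between the two orderings.
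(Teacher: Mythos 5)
Your reduction of the four equalities to two best-response constructions is sound as far as the quantifiers go (Dirac responses plus $\MD_{\odot} \subseteq \MR_{\odot}$ do let one argument serve both bullets, since $0 \leq \Reach \leq 1$), but both of the exact-attainment claims on which your plan rests are \emph{false} in the game of Fig.~\ref{fig-determ}, and this is a genuine gap. For (a): every successor of $u$ is either $t$ itself or a stochastic vertex in the third row, from which the token moves one step closer to $t$ with probability $1/2$ and otherwise drifts up and deterministically back to $v$ through $p$; since all vertices on these rows are stochastic, the $y$-th such successor reaches $t$ with probability exactly $2^{-y} > 0$ \emph{no matter what either player does}. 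Hence there is no successor of $u$ that avoids $t$ almost surely, and as soon as your frozen $\sigma$ reaches $u$ with positive probability, $\Exp^{\sigma,\pi}_v[\Reach] > 0$ for \emph{every} $\pi$: the inner infimum is $0$ but is never attained. Dually for (b): any $\sigma$ that leaves the bottom row at its $x$-th vertex is absorbed in the non-target sink $s$ with probability $2^{-x} > 0$ in each round (and a $\sigma$ that never leaves it gets $0$), so $\Exp^{\sigma,\pi}_v[\Reach] < 1$ for every memoryless $\sigma$ against every $\pi$. The step you yourself flag as delicate --- that Max attains probability \emph{exactly} $1$, because every branch Min selects leads ``surely'' to the target --- is therefore unachievable; no tuning of a history-independent $\sigma$ can remove the per-round $s$-absorption risk.

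The repair is to prove only the $\varepsilon$-versions, which suffice for the stated equalities and are what the paper does. For (a), fix $\sigma$ and $\varepsilon > 0$; if $u$ is reached with probability at most $\varepsilon$ you are done, and otherwise $\sigma$ determines a fixed per-round probability $p_s > 0$ of being trapped in $s$ before reaching $u$ (independent of $\pi$, since Min only acts at $u$). Min then picks a successor of $u$ so far from $t$ that its hitting probability $p_t = 2^{-y}$ satisfies $p_t/p_s < \varepsilon$, giving $\Exp^{\sigma,\pi}_v[\Reach] \leq \sum_{i=1}^{\infty}(1-p_s)^i\, p_t = (1-p_s)p_t/p_s \leq \varepsilon$. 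For (b), any fixed memoryless $\pi$ (Dirac or randomized) induces a fixed $p_t > 0$ at $u$, because, as noted above, \emph{every} successor of $u$ reaches $t$ with positive probability; Max responds by going up at a column $x$ with $2^{-x}/p_t < \varepsilon$, and since under such a pair almost every run is eventually absorbed in $\{s,t\}$, this yields $\Exp^{\sigma,\pi}_v[\Reach] \geq 1 - \varepsilon$. So the values $0$ and $1$ arise as non-attained limits of the geometric tug-of-war between the two arbitrarily small escape probabilities $2^{-x}$ and $2^{-y}$, not from any exact dodging or sure-hitting response.
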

\begin{proof}
  We start by proving item~(a) for MD strategies. Let $\sigma^* \in \MD_{\Box}$.
  We show that 
  $\inf_{\pi \in \MD_{\Diamond}} \Exp^{\sigma^*,\pi}_v[\Reach] = 0$. Let us fix
  an arbitrarily small $\varepsilon > 0$. We show that there is a suitable 
  $\pi^* \in \MD_{\Diamond}$ such that 
  $\Exp^{\sigma^*,\pi^*}_v[\Reach] \leq \varepsilon$.
  If the probability of reaching the vertex
  $u$ from $v$ under the strategy $\sigma^*$ is at most $\varepsilon$,
  we are done. Otherwise, let $p_s$ be the probability of visiting
  the vertex $s$ from $v$ under the strategy $\sigma$ \emph{without}
  passing through the vertex $u$. Note that $p_s >0$ and $p_s$ does 
  not depend on the strategy chosen by Player~$\Diamond$. The strategy
  $\pi^*$ selects a suitable successor of $u$ such that the probability 
  $p_t$ of visiting the vertex $t$ from $u$ without passing through 
  the vertex~$v$ satisfies $p_t/p_s < \varepsilon$ (note that $p_t$ 
  can be arbitrarily small but positive). Then 
  \[
     \Exp^{\sigma^*,\pi^*}_v[\Reach] \quad \leq \quad
     \sum_{i=1}^\infty (1-p_s)^i p_t  \quad = \quad
     \frac{(1-p_s) p_t}{p_s} \quad \leq \quad \varepsilon
  \]  
  For MR strategies, the argument is the same.

  Item~(b) is proven similarly. We show that for all $\pi^* \in \MD_{\Diamond}$
  and $0 < \varepsilon < 1$ there exists a suitable 
  $\sigma^* \in \MD_{\Box}$ such that 
  $\Exp^{\sigma^*,\pi^*}_v[\Reach] \geq 1 - \varepsilon$. Let $p_t$ be the
  probability of visiting $t$ from $u$ without passing through 
  the vertex~$v$ under the strategy $\pi^*$. We choose the strategy 
  $\sigma^*$ so that the probability $p_s$ of visiting the vertex 
  $s$ from $v$ without passing through the vertex~$u$ satisfies 
  $p_s/p_t < \varepsilon$. Note almost all runs initiated in $v$
  eventually visit either $s$ or~$t$ under $(\sigma^*,\pi^*)$. Since 
  the probability of visiting~$s$
  is bounded by $\varepsilon$ (the computation is similar to the one of
  item~(a)), we obtain $\Exp^{\sigma^*,\pi^*}_v[\Reach] \geq 1 - \varepsilon$.
  For MR strategies, the proof is almost the same.
\qed
\end{proof}

\begin{figure}[t]
\centering
\begin{tikzpicture}[x=1.5cm,y=1.5cm,font=\scriptsize]
\node (N00) at (0,0)   [max] {$v$};
\node (N01) at (0,1)   [ran] {$s$};
\node (N02) at (0,2)   [min] {$u$};
\node (N03) at (0,3)   [ran,double] {$t$};
\node (N04) at (0,4)   [ran] {$p$};
\foreach \x in {1,2,3,4,5}{%
   \node (N\x 0) at (\x,0)   [max] {};
   \node (N\x 1) at (\x,1)   [ran] {};
   \node (N\x 2) at (\x,2)   [ran] {};
   \node (N\x 3) at (\x,3)   [ran] {};
   \node (N\x 4) at (\x,4)   [ran] {};
   \draw [tran] (N\x 0) to (N\x 1); 
   \draw [tran] (N\x 1) to (N\x 2); 
   \draw [tran] (N\x 3) to (N\x 4); 
}
\foreach \y in {0,1,2,3,4}{%
   \node (N6\y) at (6,\y)   [max,draw=none] {};
   \draw [thick,dotted] (N6\y) -- +(1,0);
}
\foreach \x/\xx in {0/1,1/2,2/3,3/4,4/5,5/6}{%
   \draw [tran] (N\x 0) to (N\xx 0);
   \foreach \y in {1,2,3,4}{%
      \draw [tran] (N\xx\y) to (N\x\y);
   } 
}
\foreach \x in {0,1,2,3,4,5,6}{%
   \draw [tran,rounded corners] (N02) -- +(0,0.5) -- +(\x,0.5) -- (N\x 3);
}
\draw [tran,rounded corners] (N01) -- +(-.5,0.2) -- +(-.5,-.2) -- (N01); 
\draw [tran,rounded corners] (N03) -- +(-.5,0.2) -- +(-.5,-.2) -- (N03); 
\draw [tran,rounded corners] (N04) -- +(-1,0) -- +(-1,-4) -- (N00);

\end{tikzpicture}
\caption{A game whose vertex $v$ has neither MD-value nor MR-value.}
\label{fig-determ}
\end{figure}
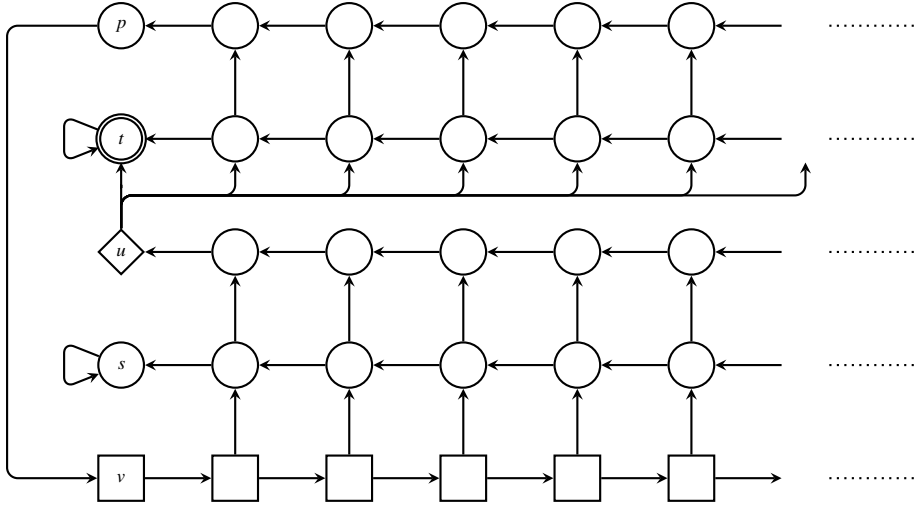




We continue by proving Theorem~\ref{thm:finite-branching-determinacy}. 
This theorem follows immediately from Lemma~\ref{lem:hrdet} and the 
following proposition:
\begin{proposition}\label{prop:finb-MD}
  If $G$ is $\Diamond$-finitely-branching, then 
\begin{enumerate}
\item \label{prop:finb-MD-box} for all $v\in V$ and $\eps>0$, 
   there is $\sigma_{\eps} \in \MD_{\Box}$ such that $\sigma_{\eps}$
   is $\eps$-$\HR$-optimal in~$v$;
\item \label{prop:finb-MD-diam} there is $\pi \in \MD_{\Diamond}$ such  
   that $\pi$ is $\HR$-optimal in every vertex.
\end{enumerate}
\end{proposition}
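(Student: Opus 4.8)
The plan is to reduce both items to \emph{one-player} verification arguments against the fixed point $\vec{K}$ of $L$, which by Lemma~\ref{lem:hrdet} coincides with $\val_{\HR}$. For a fixed strategy of one player, the game becomes a Markov decision process for the other, and the fixed-point identity for $\vec{K}$ turns $\vec{K}_{\omega(n)}$ into a potential along runs. Throughout I treat $\vec K_{\omega(n)}$ as a potential and combine it with the accumulated reward $\sum_{i<n} r(\omega(i))$ into a single (sub/super)martingale.

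For item~\ref{prop:finb-MD-diam}, I would first use $\Diamond$-finite-branching: at each $v\in V_\Diamond$ the set $\{v' : v\tran{} v'\}$ is finite, so the infimum in $\vec{K}_v = r(v)+\inf_{v\tran{}v'}\vec{K}_{v'}$ is attained. Let $\pi\in\MD_\Diamond$ pick, at each $\Diamond$-vertex, a minimizing successor, so that $\pi$ realizes the equality $L\vec K=\vec K$ at $\Diamond$-vertices. To prove $\pi$ is $\HR$-optimal, fix an arbitrary $\sigma\in\HR_\Box$ and consider $M_n = \sum_{i<n} r(\omega(i)) + \vec{K}_{\omega(n)}$. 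A one-step case analysis (using $\sum_{w'}\sigma(\cdots)(w,w')\,\vec K_{w'}\le \sup_{w\tran{}w'}\vec K_{w'}$ at $\Box$-vertices, the choice of $\pi$ at $\Diamond$-vertices, and the fixed-point identity at stochastic vertices) shows that $M_n$ is a nonnegative supermartingale under $\calP^{\sigma,\pi}_v$, whence $\Exp^{\sigma,\pi}_v[\sum_{i<n} r(\omega(i))]\le \Exp^{\sigma,\pi}_v[M_n]\le M_0 = \vec{K}_v$. Monotone convergence ($\sum_{i<n} r(\omega(i))\uparrow\Acc$) then gives $\Exp^{\sigma,\pi}_v[\Acc]\le \vec K_v = \val_{\HR}(v)$ for every $\sigma$, i.e.\ $\pi$ is $\HR$-optimal in every vertex. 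This part is robust, works uniformly across vertices and even for infinite values, and $\Diamond$-finite-branching enters precisely in attaining the minimizing successor.

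For item~\ref{prop:finb-MD-box} I would construct a greedy $\sigma_\eps\in\MD_\Box$ that, at each $\Box$-vertex $w$, selects a successor $w'$ with $\vec K_{w'}\ge \bigl(\sup_{w\tran{}w'}\vec K_{w'}\bigr)-\gamma_w$ when the supremum is finite, and with $\vec K_{w'}$ exceeding a large threshold otherwise (to cover the $\val_{\HR}(v)=\infty$ case, where $\val_{\HR}(v)\ominus\eps = 1/\eps$). The symmetric case analysis now shows, against \emph{every} $\pi\in\HR_\Diamond$, that $X_n = \sum_{i<n} r(\omega(i)) + \vec{K}_{\omega(n)} + D_n$ is a submartingale, where $D_n$ is the total gap $\sum_{i<n}\gamma_{\omega(i)}$ accumulated over $\Box$-steps. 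Hence $\Exp^{\sigma_\eps,\pi}_v[X_n]\ge \vec{K}_v$, and rearranging and letting $n\to\infty$ yields
\[
  \Exp^{\sigma_\eps,\pi}_v[\Acc] \;\ge\; \vec K_v - \limsup_{n}\Exp^{\sigma_\eps,\pi}_v[\vec K_{\omega(n)}] - \Exp^{\sigma_\eps,\pi}_v[D_\infty].
\]
It then suffices to choose the gaps $\gamma_w$ so that both the expected total gap $\Exp[D_\infty]$ and the limiting expected potential $\limsup_n\Exp[\vec K_{\omega(n)}]$ stay below $\eps/2$.

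The main obstacle is exactly these two estimates, and they are the technical heart of the argument. Controlling $\Exp[D_\infty]$ is delicate because a memoryless strategy repeats its choice on every revisit of a $\Box$-vertex, so a naive summable assignment of gaps over distinct vertices can blow up; one must choose $\gamma_w$ small relative to the expected number of visits to $w$. Showing $\limsup_n\Exp[\vec K_{\omega(n)}]\le\eps/2$ is a \emph{realization} statement: the potential stored in $\vec K$ must actually be cashed in as reward rather than kept indefinitely by $\Diamond$ in a high-potential region, and this relies essentially on $\vec K$ being the \emph{least} fixed point together with the Kleene approximation $\vec K=\sup_n L^n(\vec 0)$ (whose validity uses $\sup$-continuity of $L$, and hence $\Diamond$-finite-branching). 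I would carry out both estimates by adapting the gap-selection and convergence analysis developed for reachability in~\cite{BBFK:BPA-games-reachability-IC}, splitting into the finite-value and infinite-value cases, the latter amounting to driving $\Exp^{\sigma_\eps,\pi}_v[\sum_{i<n}r(\omega(i))]$ past the threshold $1/\eps$ using a finite certificate $\vec K^{(n_0)}_v$ from the approximation.
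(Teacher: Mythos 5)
Your treatment of Item~\ref{prop:finb-MD-diam} is correct and is essentially the paper's own argument: the paper also takes the pointwise minimizer of $\val_{\HR}$ (attained by $\Diamond$-finite branching) and verifies optimality by exactly the supermartingale/downward-induction bound you describe, packaged as Lemma~\ref{lem:eps-min} applied with $\eps=0$ together with Lemma~\ref{lem:hrdet}.

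For Item~\ref{prop:finb-MD-box}, however, there is a genuine gap, and it sits precisely at the step you yourself flag as the heart of the matter: the realization estimate $\limsup_n \Exp^{\sigma_\eps,\pi}_v[\vec K_{\omega(n)}]\le \eps/2$ is \emph{not} achievable by any choice of gaps $\gamma_w$ in a strategy that is greedy with respect to $\vec K$ alone. Consider the one-player game with $V_\Diamond=V_\bigcirc=\emptyset$, vertices $v_0,v_1,v_2,\dots$ and $t_0,t_1,\dots$, transitions $v_i\tran{} v_{i+1}$ and $v_i\tran{} t_i$, where $r(v_i)=0$, $r(t_i)=1-2^{-i}$, and each $t_i$ leads to a zero-reward sink. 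The least fixed point gives $\vec K_{v_i}=1$ for all $i$, so the move $v_i\tran{} v_{i+1}$ has gap \emph{zero} with respect to $\vec K$: the selection rule ``pick $w'$ with $\vec K_{w'}\ge \sup_{w\tran{} w''}\vec K_{w''}-\gamma_w$'' admits the MD strategy that moves right forever, which earns payoff $0$ against value $1$, while $\Exp[\vec K_{\omega(n)}]=1$ for every $n$. No assignment of gaps rules this out, because the bad choice is \emph{exactly} greedy. So your submartingale $X_n$ is fine, but the construction does not pin down a strategy for which the rearranged inequality yields anything; the missing ingredient is a selection criterion that forces the potential to be cashed in, and that criterion cannot be read off $\vec K$ itself.

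This is where your route diverges from the paper's, and why the paper resorts to \emph{discounting}. The paper first proves (Lemma~\ref{lem-max-bound}, via Scott-continuity of $L$ and the Kleene characterization $\vec K=\sup_n L^n(\vec 0)$ — this is where $\Diamond$-finite branching and leastness enter, as you correctly anticipate) that finite-horizon values approximate $\val_{\HR}(v)$, hence (Lemma~\ref{lem:discount}) the $\lambda$-discounted value is $\eps/2$-close to $\val_{\HR}(v)$ for $\lambda$ near $1$. It then takes the MD strategy that is greedy \emph{with respect to the discounted value} with per-step slack $\eps/(4\ell)$; discounting penalizes deferral of reward, so realization within the first $\ell$ steps is automatic and the tail contributes at most $\eps/8$. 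Note that the horizon/rank devices available in the undiscounted setting (the ordinal labelings of Lemmas~\ref{lem:ord-labeling} and~\ref{lem:ord-labeling-inf}) inherently decrease along the play and therefore produce \emph{history-dependent} strategies (Lemma~\ref{lem:eps-HD-opt}); your appeal to the Kleene approximation ``with a finite certificate $\vec K^{(n_0)}_v$'' runs into the same obstruction, since tracking the remaining horizon requires memory. To repair your proof you would have to replace greediness w.r.t.\ $\vec K$ by greediness w.r.t.\ a value function in which waiting is costly — which is, in substance, the paper's discounted construction.
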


As an immediate corollary to  Proposition~\ref{prop:finb-MD}, 
we obtain the following result:
\begin{corollary}
\label{cor-MD-optimal}
If $G$ is $\Diamond$-finitely-branching, $V_\Box$ is finite, and
every vertex of $V_\Box$ has finitely many successors,
then there is $\sigma \in \MD_{\Box}$ such that $\sigma$ is 
\mbox{$\HR$-optimal} in every vertex.
\end{corollary}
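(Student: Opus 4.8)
The plan is to derive the corollary from Proposition~\ref{prop:finb-MD} by using the finiteness of $V_\Box$ to upgrade the family of $\eps$-$\HR$-optimal MD strategies (one per vertex and per $\eps$) into a single MD strategy that is exactly optimal in every vertex simultaneously. I would first invoke Proposition~\ref{prop:finb-MD}(\ref{prop:finb-MD-diam}) to fix a single $\pi^* \in \MD_\Diamond$ that is $\HR$-optimal in every vertex; this is the easy half, and it means that once a strategy $\sigma$ for Player~$\Box$ achieves $\Exp^{\sigma,\pi^*}_v[\Acc] \geq \val_\HR(v)$, it is automatically $\HR$-optimal in~$v$, because $\val_\HR(v) \leq \inf_\pi \sup_\sigma \Exp^{\sigma,\pi}_v[\Acc] \leq \sup_\sigma \Exp^{\sigma,\pi^*}_v[\Acc]$ and the value sandwiches everything. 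So the whole problem reduces to constructing one MD strategy for Player~$\Box$ that is optimal against the \emph{fixed} adversary~$\pi^*$.

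Against a fixed MD adversary, the game collapses to a Markov decision process for Player~$\Box$ with state space $V$ (the vertices of $V_\Diamond$ and $V_\bigcirc$ now have their transitions fixed), maximizing the total accumulated reward $\Acc$. The key structural input is that $V_\Box$ is finite and each of its vertices has finitely many successors; I would argue that in this MDP the optimal values $\val_\HR(\cdot)$ satisfy the Bellman optimality equation derived from~$L$, namely $\val_\HR(v) = r(v) + \max_{v \tran{} v'} \val_\HR(v')$ for $v \in V_\Box$, where the supremum in the definition of $L$ is now attained because the successor set is \emph{finite}. This attainment is exactly where finite branching of $V_\Box$ matters. I then define $\sigma \in \MD_\Box$ by letting it pick, at each $v \in V_\Box$, some successor $v'$ realizing the maximum, i.e. $\val_\HR(v) = r(v) + \val_\HR(v')$.

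It then remains to verify that this greedy MD strategy actually achieves the value against~$\pi^*$, i.e. $\Exp^{\sigma,\pi^*}_v[\Acc] \geq \val_\HR(v)$ for every~$v$. Since $\Acc$ is a supremum of its finite truncations $\TermB{n}$ and rewards are non-negative, I would prove the matching lower bound by showing that the greedy choices propagate value correctly: unrolling the expectation along the Markov chain induced by $(\sigma,\pi^*)$ and using that every $V_\Box$-step preserves $\val_\HR$ exactly while every $V_\Diamond$- and $V_\bigcirc$-step preserves it in expectation (by $\HR$-optimality of $\pi^*$ and the fixed-point property of $\vec{K} = \val_\HR$), one obtains that the expected accumulated reward over the first $n$ steps plus the tail value never drops below $\val_\HR(v)$; letting $n \to \infty$ and applying monotone convergence to the non-negative partial sums yields $\Exp^{\sigma,\pi^*}_v[\Acc] \geq \val_\HR(v)$. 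The main obstacle I anticipate is the tail term: when $\val_\HR(v) = \infty$ the bookkeeping must be done so that the finite truncations diverge to $\infty$, and when it is finite one must control the residual value after $n$ steps; here the finiteness of $V_\Box$ is again convenient, since it lets me uniformly bound the branching behaviour and ensures the greedy selector is well defined everywhere in~$V_\Box$, so no measurable-selection subtleties arise.
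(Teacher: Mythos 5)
There is a genuine gap, and it sits exactly where you flagged your ``main obstacle'': the tail term does not go away, and the finiteness of $V_\Box$ does not help. A greedy MD strategy that merely attains the Bellman maximum, i.e.\ picks some $v'$ with $\val_{\HR}(v) = r(v) + \val_{\HR}(v')$, is \emph{not} optimal for total accumulated reward, even in a one-vertex MDP satisfying all hypotheses of the corollary. Take $V_\Diamond = \emptyset$, a single vertex $v \in V_\Box$ with $r(v)=0$ and two transitions $v \tran{} v$ and $v \tran{} t$, where $t$ has reward $1$ and then moves to an absorbing zero-reward sink. Then $\val_{\HR}(v)=1$ and \emph{both} successors attain the maximum ($\val_{\HR}(v) = \val_{\HR}(t\text{-branch}) = 1$), so your selector may legitimately choose the self-loop forever and accumulate $0$. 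Your unrolling argument only yields
\[
\Exp^{\sigma,\pi^*}_v[\Acc_n] \;\geq\; \val_{\HR}(v) - \Exp^{\sigma,\pi^*}_v\bigl[\val_{\HR}(\omega(n{+}1))\bigr],
\]
and the subtracted residual term is constantly $1$ in this example, so monotone convergence cannot close the argument; no uniform bound coming from $|V_\Box| < \infty$ changes this. This is precisely why the paper does not use a plain greedy rule for Player~$\Box$: its $\eps$-$\HR$-optimal MD strategies (Proposition~\ref{prop:finb-MD}, Item~\ref{prop:finb-MD-box}) are built via \emph{discounting}, which penalizes procrastination and thereby breaks exactly these ties; the corollary is then obtained by pigeonhole --- since $V_\Box$ is finite with finitely branching vertices there are only finitely many MD strategies, so one of them is $\eps$-$\HR$-optimal in $v$ for infinitely many $\eps \in \{1,1/2,1/4,\dots\}$, hence $\HR$-optimal in $v$, followed by a repair step over the vertices not reached under that strategy.

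A second, independent flaw is your opening reduction: achieving $\Exp^{\sigma,\pi^*}_v[\Acc] \geq \val_{\HR}(v)$ against the single fixed optimal adversary $\pi^*$ does \emph{not} make $\sigma$ $\HR$-optimal in $v$. Consider $m \in V_\Diamond$ with transitions to an absorbing branch worth $5$ and to $x \in V_\Box$, where $x$ has transitions to branches worth $3$ and $10$ (all rewards collected on entry). Then $\val_{\HR}(x)=10$, $\val_{\HR}(m)=5$, and $\pi^*$ moves $m$ to the branch worth $5$. The strategy $\sigma$ choosing the branch worth $3$ at $x$ achieves exactly $\val_{\HR}(m)=5$ against $\pi^*$ from $m$, yet a deviating $\pi$ moving $m \tran{} x$ drives it down to $3 < 5$. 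Being a best response to an optimal adversary is weaker than being optimal; your sandwich inequality only shows $\sup_{\sigma'}\Exp^{\sigma',\pi^*}_v[\Acc] = \val_{\HR}(v)$, which constrains $\pi^*$, not $\sigma$. (One could try to repair this by demanding value-achievement against $\pi^*$ from \emph{every} vertex, but your greedy construction does not deliver even that, as the self-loop example shows.)
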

\begin{proof}
  Due to Proposition~\ref{prop:finb-MD}, for every vertex $v$ and 
  every $\eps > 0$, there is $\sigma_{\eps} \in \MD_{\Box}$ such 
  that $\sigma_{\eps}$ is $\eps$-$\HR$-optimal in~$v$. Since 
  $V_\Box$ is finite and every vertex of $V_\Box$ has only finitely many
  successors, there are only finitely many $\MD$-strategies 
  for Player~$\Box$. Hence, there is a MD~strategy $\sigma$ that is
  $\eps$-$\HR$-optimal in $v$ for infinitely many $\eps$ from
  the set $\{1,1/2,1/4,\dots\}$. Such a strategy is clearly
  $\HR$-optimal in~$v$. Note that $\sigma$ is \mbox{$\HR$-optimal} in
  every vertex which can be reached from $v$ under $\sigma$ and some
  strategy $\pi$ for Player~$\Diamond$. For the remaining vertices,
  we can repeat the argument, and thus eventually produce a MD strategy
  that is $\HR$-optimal in every vertex.
\qed
\end{proof}

Hence, if all non-stochastic vertices have finitely many successors and
$V_\Box$ is finite, then both players have $\HR$-optimal MD~strategies.
This can be seen as a (tight) generalization of the corresponding
result for finite-state games~\cite{FV:book}. 

The rest of this section is devoted to a proof of
Proposition~\ref{prop:finb-MD}. We start with Item~\ref{prop:finb-MD-box}.
The strategy $\sigma_{\varepsilon}$ is constructed by employing 
discounting. Assume, w.l.o.g., that rewards are bounded by~$1$
(if they are not, we may split
every state $v$ with a reward $r(v)$ into a sequence of $\lceil
r(v)\rceil$ states, each with the reward $r(v)/\lceil r(v)\rceil$).
Given $\lambda\in (0,1)$, define $\Acc^{\lambda} : \run \rightarrow
\Rset^{\geq 0}$ to be a function which to every run~$\omega$ assigns
$\Acc^{\lambda}(\omega) = \sum_{i=0}^{\infty} \lambda^i \cdot
r(\omega(i))$.
\begin{lemma}\label{lem:discount}
For $\lambda$ sufficiently close to one we have that 
\[
\adjustlimits\sup_{\sigma\in\HR_{\Box}} \inf_{\pi\in\HR_{\Diamond}} \Exp^{\sigma,\pi}_v(\Acc^{\lambda})
  \quad \geq \quad
\val_{\HR}(v) \ominus \frac{\varepsilon}{2}
\]
\end{lemma}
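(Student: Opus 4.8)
The plan is to approximate the undiscounted value $\val_{\HR}(v)$ from below by finite-horizon values and then exploit the fact that within a bounded horizon the discount factor $\lambda^i$ stays close to~$1$. Writing $\vec{0}$ for the constant zero vector, the first step is to observe that in the $\Diamond$-finitely-branching setting the Bellman operator $L$ is continuous on $\vleq$-increasing chains: suprema commute with the $\sup$ at $\Box$-vertices, and by monotone convergence with the weighted sums at stochastic vertices, while the only delicate point — commuting the supremum of an increasing chain with the $\inf$ at a $\Diamond$-vertex — holds precisely because every $\Diamond$-vertex has finitely many successors. By Kleene's theorem and Lemma~\ref{lem:hrdet} this yields $\val_{\HR}(v) = \vec{K}_v = \sup_{N} L^{N}(\vec{0})_v$, and since rewards are bounded by~$1$ each $L^{N}(\vec{0})_v \le N$ is finite.

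Next I would note, by an easy induction on the definition of $L$, that $L^{N}(\vec 0)_v$ is exactly the value of the $N$-step game whose payoff is $\sum_{i=0}^{N-1} r(\omega(i))$. Hence for any $\delta>0$ there is a time-dependent deterministic strategy $\sigma^{*}\in\HD_{\Box}\subseteq\HR_{\Box}$, obtained by backward induction over the horizon $N$ and choosing at each $\Box$-vertex a successor that is $\delta/N$-optimal for the corresponding one-step value, such that $\inf_{\pi\in\HR_{\Diamond}} \Exp^{\sigma^{*},\pi}_{v}\bigl[\sum_{i=0}^{N-1} r(\omega(i))\bigr] \ge L^{N}(\vec 0)_v - \delta$; here the slack only accumulates at $\Box$-vertices and stays below $\delta$. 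The crucial feature — and the reason for passing through a finite horizon instead of reusing an undiscounted-optimal strategy — is that $\sigma^{*}$ guarantees this reward already within the first $N$ steps, so discounting can cost at most a factor $\lambda^{N-1}$.

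Concretely, since $\lambda^{i} \ge \lambda^{N-1}$ for $i \le N-1$ and rewards are non-negative, we have the pointwise bound $\Acc^{\lambda}(\omega) \ge \lambda^{N-1}\sum_{i=0}^{N-1} r(\omega(i))$, whence
\[
\adjustlimits\sup_{\sigma\in\HR_{\Box}} \inf_{\pi\in\HR_{\Diamond}} \Exp^{\sigma,\pi}_v(\Acc^{\lambda})
\ \ge\ \inf_{\pi\in\HR_{\Diamond}} \Exp^{\sigma^{*},\pi}_v(\Acc^{\lambda})
\ \ge\ \lambda^{N-1}\bigl(L^{N}(\vec 0)_v - \delta\bigr).
\]

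It then remains to fix $N$, $\delta$ and $\lambda$, distinguishing two cases according to the definition of $\ominus$. If $\val_{\HR}(v)=K<\infty$ we may assume $K>\varepsilon/2$ (otherwise the claim is trivial, the left-hand side being non-negative); I would pick $N$ with $L^{N}(\vec 0)_v \ge K-\varepsilon/4$, set $\delta=\varepsilon/8$, and choose $\lambda$ close enough to $1$ that $\lambda^{N-1}(K-3\varepsilon/8)\ge K-\varepsilon/2$, which is possible since $\lambda^{N-1}\to 1$ and $K-3\varepsilon/8>K-\varepsilon/2>0$. If $\val_{\HR}(v)=\infty$, then $\infty\ominus\varepsilon/2 = 2/\varepsilon$, and since $\sup_N L^N(\vec 0)_v=\infty$ I would pick a finite $N$ with $L^{N}(\vec 0)_v\ge 4/\varepsilon$, set $\delta=1/\varepsilon$, and take $\lambda$ with $\lambda^{N-1}\ge 2/3$, giving $\lambda^{N-1}(3/\varepsilon)\ge 2/\varepsilon$. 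The main obstacle is the first step, $\val_{\HR}(v)=\sup_N L^N(\vec 0)_v$, and specifically the interchange of the increasing supremum with the infimum at $\Diamond$-vertices, which genuinely requires $\Diamond$-finite-branching (it fails for infinite branching); it is exactly this lower approximation of $\val_{\HR}(v)$ by finite-horizon values that drives the whole argument.
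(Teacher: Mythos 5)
Your proof is correct, and its outer skeleton coincides with the paper's: truncate to a finite horizon, show the horizon-$n$ value approximates $\val_{\HR}(v)$ from below, and observe that discounting costs little within a fixed horizon (your multiplicative bound $\Acc^{\lambda}(\omega) \geq \lambda^{N-1}\sum_{i=0}^{N-1}r(\omega(i))$ versus the paper's additive bound $\Acc^{\lambda}(\omega)\geq \Acc_n(\omega)-\varepsilon/4$ is a cosmetic difference). Where you genuinely diverge is in how the finite-horizon approximation (the paper's Lemma~\ref{lem-max-bound}) is established. Both routes rest on the same crucial ingredient, which you correctly identify as the heart of the matter: Scott-continuity of $L$ under $\Diamond$-finite branching (the sup/inf interchange at $\Diamond$-vertices) plus Kleene's theorem, giving $\val_{\HR}(v)=\vec{K}_v=\sup_{N}L^{N}(\vec{0})_v$. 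But from there the paper reuses the transfinite machinery built for Lemma~\ref{lem:eps-HD-opt}: it instantiates the ordinal labeling $d$ of Lemma~\ref{lem:ord-labeling} (resp.\ Lemma~\ref{lem:ord-labeling-inf} when some values are infinite) with $\alpha=\omega$, bounds the stopping time $\tau$ by $n=d(vu)+1$, and invokes the $(\eps/8)$-$\HR$-optimal strategy $\sigma_{\eps/8}$, whose guaranteed reward accrues before time $\tau$. You instead prove the finite-horizon bound from scratch by backward induction: $L^{N}(\vec{0})_v$ is the $N$-step value, and a time-dependent deterministic strategy making $\delta/N$-optimal one-step choices secures $L^{N}(\vec{0})_v-\delta$ within $N$ steps. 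This is more elementary and self-contained, and it has a concrete technical payoff: since rewards are normalized to be at most $1$, all iterates satisfy $L^{N}(\vec{0})_u\leq N<\infty$, so the case of vertices with infinite $\HR$-value never intrudes on the induction — whereas the paper must route that case through the separate Lemma~\ref{lem:ord-labeling-inf}. Your explicit case split on $\val_{\HR}(v)=\infty$ via $\infty\ominus\varepsilon/2=2/\varepsilon$ at the very end is also slightly more careful than the paper's treatment, which leaves the $\ominus$ bookkeeping implicit. The paper's route, in exchange, gets Lemma~\ref{lem-max-bound} almost for free once the labeling apparatus exists, since that apparatus is needed anyway for Lemma~\ref{lem:eps-HD-opt}.
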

\begin{proof}
  We show that for every $\varepsilon>0$ there is $n\geq 0$ such
  that the expected reward that Player~$\Box$ may accumulate up to~$n$
  steps is $\varepsilon$-close to $\val_{\HR}(v)$  no matter what
  Player~$\Diamond$ is doing.  Formally, define $\Acc_k : \run
  \rightarrow \Rset^{\geq 0}$ to be a function which to every
  run~$\omega$ assigns $\Acc_k(\omega) = \sum_{i=0}^{k} r(\omega(i))$.
  The following lemma is proved in Appendix~\ref{app:nstep}.
\begin{lemma}
\label{lem-max-bound} 
If $G$ is $\Diamond$-finitely-branching, then for every 
$v \in V$ there is $n\in \Nset$ such that
\[ 
  \adjustlimits\sup_{\sigma\in\HR_{\Box}} \inf_{\pi\in\HR_{\Diamond}} \Exp^{\sigma,\pi}_v(\Acc_n) 
  \quad > \quad \val_{\HR}(v) \ominus \frac{\varepsilon}{4}
\]
\end{lemma}
Clearly, if $\lambda$ is close to one, then for every run $\omega$ we
have that
\[
  \Acc^{\lambda}(\omega)\quad \geq \quad \Acc_n(\omega) -
 \frac{\varepsilon}{4}
\]
Thus,
\[
\adjustlimits\sup_{\sigma\in\HR_{\Box}} \inf_{\pi\in\HR_{\Diamond}} \Exp^{\sigma,\pi}_v(\Acc^{\lambda})
\quad \geq \quad 
\adjustlimits\sup_{\sigma\in\HR_{\Box}} \inf_{\pi\in\HR_{\Diamond}} \Exp^{\sigma,\pi}_v(\Acc_n)-\frac{\varepsilon}{4}
\quad \geq \quad 
\val_{\HR}(v) \ominus \frac{\varepsilon}{2}
\]
This proves Lemma~\ref{lem:discount}. \qed
\end{proof}
So, it suffices to find a MD strategy $\sigma_{\varepsilon}$
satisfying 
\[
 \inf_{\pi\in\HR_{\Diamond}} \Exp^{\sigma_{\varepsilon},\pi}_v(\Acc^{\lambda})
  \quad \geq \quad
 \adjustlimits\sup_{\sigma\in\HR_{\Box}}
 \inf_{\pi\in\HR_{\Diamond}} \Exp^{\sigma,\pi}_v(\Acc^{\lambda})-\frac{\varepsilon}{2} \,.
\]
We define such a strategy as follows.  Let us fix some $\ell \in \Nset$ 
satisfying
\[
  \frac{\lambda^{\ell}}{1-\lambda} \cdot \max_{v\in V}
  r(v) \quad < \quad \frac{\varepsilon}{8} \,.
\]
Intuitively, the discounted reward accumulated after $\ell$ steps 
can be at most $\frac{\varepsilon}{8}$.  In a given vertex 
$v\in V_{\Box}$, the strategy
$\sigma_{\varepsilon}$ chooses a fixed successor vertex~$u$ satisfying
\[
\adjustlimits\sup_{\sigma\in\HR_{\Box}}\inf_{\pi\in\HR_{\Diamond}} \Exp^{\sigma,\pi}_u(\Acc^{\lambda})\quad \geq\quad \adjustlimits\sup_{v\tran{} u'}\sup_{\sigma\in\HR_{\Box}}\inf_{\pi\in\HR_{\Diamond}} \Exp^{\sigma,\pi}_{u'}(\Acc^{\lambda}) - \frac{\varepsilon}{\ell\cdot 4}
\]
Now we show that 
\[
\inf_{\pi\in\HR_{\Diamond}} \Exp^{\sigma_{\varepsilon},\pi}_v(\Acc^{\lambda})\quad \geq \quad 
\adjustlimits\sup_{\sigma\in\HR_{\Box}} \inf_{\pi\in\HR_{\Diamond}} \Exp^{\sigma,\pi}_v(\Acc^{\lambda})-\frac{\varepsilon}{2}\, .
\]
which finishes the proof of Item~\ref{prop:finb-MD-box} of 
Proposition~\ref{prop:finb-MD}.

For every ${k}\in \Nset$ we denote by $\sigma_{{k}}$ a strategy for
Player $\Box$ defined as follows: For the first ${k}$ steps the
strategy makes the same choices as $\sigma_{\varepsilon}$, i.e.,
chooses, in each state $v\in V_{\Box}$, a next state $u$ satisfying
\[
\adjustlimits\sup_{\sigma\in\HR_{\Box}}\inf_{\pi\in\HR_{\Diamond}} \Exp^{\sigma,\pi}_u(\Acc^{\lambda})\quad \geq\quad \sup_{v\tran{} u'} \sup_{\sigma\in\HR_{\Box}}\inf_{\pi\in\HR_{\Diamond}} \Exp^{\sigma,\pi}_{u'}(\Acc^{\lambda}) - \frac{\varepsilon}{{k}\cdot 4}
\]
From $k{+}1$-st step on, say in a state $u$, the strategy follows some strategy $\zeta$ satisfying
\[
\inf_{\pi\in\HR_{\Diamond}} \Exp^{\zeta,\pi}_u(\Acc^{\lambda}) 
   \quad \geq \quad \sup_{\sigma\in\HR_{\Box}} \inf_{\pi\in\HR_{\Diamond}} \Exp^{\sigma,\pi}_u(\Acc^{\lambda})-\frac{\varepsilon}{8} 
\]
A simple induction reveals that $\sigma_{{k}}$ satisfies
\begin{equation}\label{eq:k-step-disc}
\inf_{\pi\in\HR_{\Diamond}} \Exp^{\sigma_{{k}},\pi}_v(\Acc^{\lambda}) 
   \quad \geq \quad \sup_{\sigma\in\HR_{\Box}} \inf_{\pi\in\HR_{\Diamond}} \Exp^{\sigma,\pi}_v(\Acc^{\lambda})-\frac{3\varepsilon}{8} 
\end{equation}
(Intuitively, the error of each of the first ${k}$ steps is at most $\frac{\varepsilon}{{k}\cdot 4}$ and
thus the total error of the first ${k}$ steps is at most ${k} \cdot \frac{\varepsilon}{{k}\cdot 4}=\frac{\varepsilon}{4}$. The rest has the error at most $\frac{\varepsilon}{8}$ and thus the total error is at most $\frac{3\varepsilon}{8}$.)

We consider $k=\ell$ (recall that $\frac{\lambda^{\ell}}{1-\lambda} \cdot \max_{v\in V} r(v)<\frac{\varepsilon}{8}$).
Then
\[
\inf_{\pi\in\HR_{\Diamond}}\Exp^{\sigma_{\varepsilon},\pi}_v(\Acc^{\lambda}) \quad \geq \quad
   \inf_{\pi\in\HR_{\Diamond}} \Exp^{\sigma_{k},\pi}_v(\Acc^{\lambda})-\frac{\varepsilon}{8} 
   \quad \geq \quad \sup_{\sigma\in\HR_{\Box}} \inf_{\pi\in\HR_{\Diamond}} \Exp^{\sigma,\pi}_v(\Acc^{\lambda})-\frac{\varepsilon}{2} 
\]
Here the first equality follows from the fact that $\sigma_k$ behaves similarly to $\sigma_{\varepsilon}$ on the first $k=\ell$ steps and the discounted reward accumulated after $k$ steps is at most $\frac{\varepsilon}{8}$. The second inequality follows from Equation~(\ref{eq:k-step-disc}).

It remains to prove Item~\ref{prop:finb-MD-diam} of 
Proposition~\ref{prop:finb-MD}. The MD strategy $\pi$ can be easily 
constructed as follows: In every state $v\in V_{\Diamond}$, the 
strategy $\pi$ chooses a successor $u$ minimizing $\val_{\HR}(u)$ 
among all successors of $v$. We show in Appendix~\ref{app:MD-opt-min} that this 
is indeed an optimal strategy.

\section{Conclusions}
\label{sec-concl}

We have considered infinite-state stochastic games with the total
accumulated reward objective, and clarified the determinacy questions
for the HR, HD, MR, and MD strategy types. Our results are almost 
complete. One natural question which remains open is whether
Player~$\Box$ needs memory to play $\eps$-$\HR$-optimally in general
games (it follows from the previous works, e.g., 
\cite{BBFK:BPA-games-reachability-IC,Kucera:games-chapter}, 
that $\eps$-$\HR$-optimal strategies for Player~$\Diamond$ require 
infinite memory in general).


%
%

\bibliography{str-long,concur}

\newpage
\appendix

\noindent
\begin{center}
{\huge{\bf Technical Appendix}}
\end{center}
\section{Proof of Lemma \ref{lem:hrdet}}
\label{app-proofs}

\begin{reflemma}{lem:hrdet}
The operator $L$ has the least fixed point $\vec{K}$ (w.r.t.{} $\vleq$)
and for every $v \in V$ we have that
\[
  \vec{K}_v \quad = \quad \adjustlimits \sup_{\sigma \in
  \HR_{\Box}} \inf_{\pi\in \HR_{\Diamond}} \Exp^{\sigma,\pi}_{v}[\Acc]
  \quad = \quad \adjustlimits\inf_{\pi\in \HR_{\Diamond}} \sup_{\sigma \in
  \HR_{\Box}} \Exp^{\sigma,\pi}_{v}[\Acc] \quad = \quad \val_{\HR}(v).
\]
Moreover, for every $\eps>0$ there is  $\pi_{\eps} \in \HD_{\Diamond}$
such that for every $v \in V$ we have that
$\sup_{\sigma \in \HR_{\Box}}\Exp^{\sigma,\pi_{\eps}}_v \leq \val_{\HR}(v)\oplus\eps$.
\end{reflemma}
The partially ordered set $((\Rni)^{V},\vleq)$, where $\vleq$ is a standard componentwise ordering, is a complete lattice. Moreover, from the definition of $L$ we can easily see that $L$ is monotonic, i.e. $L(\vec{x})\vleq L(\vec{x'})$ whenever $\vec{x} \vleq \vec{x'}$. Thus, by the Knaster-Tarski theorem the operator $L$ has the least fixed point, which we denote by $\vec{K}$.

In order to prove that $\vec{K}_v = \val_{\HR}(v)$ for every $v \in V$, it suffices to prove the following:
\begin{equation}
\label{eq:det1}
 \forall v \in V: \quad \vec{K}_{v} \leq \adjustlimits \sup_{\sigma \in \HR_{\Box}} \inf_{\pi\in \HR_{\Diamond}} \Exp^{\sigma,\pi}_{v}(\Acc) \leq  \adjustlimits\inf_{\pi\in \HR_{\Diamond}} \sup_{\sigma \in \HR_{\Box}} \Exp^{\sigma,\pi}_{v}(\Acc) \leq \vec{K}_{v}.
\end{equation}
The second inequality holds trivially, so it suffices to prove the remaining ones.

To prove the first inequality, it suffices to show that the vector $\vec{S} \in (\Rni)^{V}$ defined by $\vec{S}_v=\sup_{\sigma \in \HR_{\Box}} \inf_{\pi\in \HR_{\Diamond}} \Exp^{\sigma,\pi}_{v}(\Acc)$ is a fixed point of $L$. Since $\vec{K}$ is the least fixed point of $L$, the inequality then follows. So let $v \in V$ be arbitrary. We will show that $L(\vec{S})_v = \vec{S}_v$.

If $v \in V_{\Box}$, then we have to show that $$L(\vec{S})_v = r(u) + \sup_{v\tran{}v'}\adjustlimits\sup_{\sigma\in\HR_{\Box}}\inf_{\pi\in\HR_{\Diamond}}\Exp^{\sigma,\pi}_{v'}(\Acc) = \adjustlimits\sup_{\sigma\in\HR_{\Box}}\inf_{\pi\in\HR_{\Diamond}}\Exp^{\sigma,\pi}_{v}(\Acc) = \vec{S}_v.$$ Assume, for the sake of contradiction, that the equality does not hold, i.e. that either $L(\vec{S})_v < \vec{S}_v$ or $L(\vec{S})_v > \vec{S}_v$. If $L(\vec{S})_v > \vec{S}_v$, then there is a transition $v\tran{} v'$ and a strategy $\sigma' \in \HR_{\Box}$ such that $r(u) + \inf_{\pi\in\HR_{\Diamond}}\Exp^{\sigma',\pi}_{v'}(\Acc) > \sup_{\sigma\in\HR_{\Box}}\inf_{\pi\in\HR_{\Diamond}}\Exp^{\sigma,\pi}_{v}(\Acc)$. If we denote by $\sigma''$ the strategy that moves from the initial vertex $v$ to $v'$ with probability 1 and then starts to behave exactly like the strategy $\sigma'$, then we obtain
\[
 \inf_{\pi\in\HR_{\Diamond}}\Exp^{\sigma'',\pi}_{v}(\Acc) = r(u) + \inf_{\pi\in\HR_{\Diamond}}\Exp^{\sigma',\pi}_{v'}(\Acc) > \adjustlimits\sup_{\sigma\in\HR_{\Box}}\inf_{\pi\in\HR_{\Diamond}}\Exp^{\sigma,\pi}_{v}(\Acc) \geq \inf_{\pi\in\HR_{\Diamond}}\Exp^{\sigma'',\pi}_{v}(\Acc),
\]
a contradiction. So assume that $L(\vec{S})_v < \vec{S}_v$. 
Then there is some $\delta>0$ and some function $f\colon \HR_{\Box}\times V \rightarrow \HR_{\Diamond}$ such that for every transition $v \tran{} v'$ and every $\sigma \in \HR_{\Box}$ we have $r(u) + \Exp^{\sigma,f(\sigma,v')}_{v'} < \vec{S}_v\ominus\delta$. For any strategy $\sigma$ we denote by $p_{\sigma}^{v'}$ the probability the strategy $\sigma$ assigns to transition $v \rightarrow v'$ in a game starting in $v$. Then we can write
\begin{align*}
 \adjustlimits\sup_{\sigma\in\HR_{\Box}}\inf_{\pi\in\HR_{\Diamond}}\Exp^{\sigma,\pi}_{v}(\Acc) &= r(u) +  \adjustlimits\sup_{\sigma\in\HR_{\Box}}\inf_{\pi\in\HR_{\Diamond}}\sum_{v\tran{}v'}p_{\sigma}^{v'}\cdot\Exp^{\sigma,\pi}_{v'}(\Acc)\\ &\leq r(u) + \sup_{\sigma\in\HR_{\Box}}\sum_{v\tran{}v'}p_{\sigma}^{v'}\cdot\Exp^{\sigma,f(\sigma,v')}_{v'}(\Acc) < \vec{S}_v\ominus\delta \\&\leq \vec{S}_v=\sup_{\sigma\in\HR_{\Box}}\inf_{\pi\in\HR_{\Diamond}}\Exp^{\sigma,\pi}_{v}(\Acc),
\end{align*}
again a contradiction.

For $v \in V_{\Diamond}$ the proof is dual to the proof for $v \in V_{\Box}$, so we omit it. Finally, for $v \in V_{\bigcirc}$ we have
\begin{align*}
 L(\vec{S}_v) &= r(u) +\sum_{v \tran{} v'}\Prob(v)(v,v')\cdot\left(\adjustlimits\sup_{\sigma\in\HR_{\Box}}\inf_{\pi\in\HR_{\Diamond}}\Exp^{\sigma,\pi}_{v'}(\Acc)\right)\\ &=\adjustlimits\sup_{\sigma\in\HR_{\Box}}\inf_{\pi\in\HR_{\Diamond}}\left( r(u) +\sum_{v \tran{} v'}\Prob(v)(v,v')\cdot\Exp^{\sigma,\pi}_{v'}(\Acc)\right) = \adjustlimits\sup_{\sigma\in\HR_{\Box}}\inf_{\pi\in\HR_{\Diamond}}\Exp^{\sigma,\pi}_{v}(\Acc)= \vec{S}_v.
\end{align*}
 This concludes the proof that $\vec{S}$ is a fixed point of $L$ and thus also the proof of the first inequality in \eqref{eq:det1}.

 It remains to prove the third inequality in \eqref{eq:det1}. To this end we prove that for every $\eps>0$ there is a strategy $\pi_{\eps}\in\HD_{\Diamond}$ such that for every $v\in V$ we have $\sup_{\sigma\in\HR_{\Box}}\Exp^{\sigma,\pi_{\eps}}_v \leq \vec{K}_v + \eps$. Note that this will also prove the second part of the lemma.

 If $\vec{K}_v = \infty$, then the desired inequality holds trivially for any strategy of player $\Diamond$ (and particularly for every $\pi\in\HD_{\Diamond}$). So assume that $\vec{K}_v$ is finite and fix arbitrary $\eps>0$. We define the strategy $\pi_{\eps}$ as follows: let $wu$ be any finite path with $u \in V_{\Diamond}$. Since $\vec{K}$ is a fixed point of $L$, there must be a successor $u'$ of $u$ such that $r(u) + \vec{K}_{u'} \leq \vec{K}_{u} + \eps/2^{|wu|+1}$. We set $\pi_{\eps}(w)$ to be a Dirac distribution that selects the transition $u \tran{} u'$ with probability 1.

We will now prove the following lemma, that not only shows that the strategy $\pi_{\eps}$ has the desired property, but it will also be useful later.
\begin{lemma}
 \label{lem:eps-min}
Let $\eps\geq 0$ be arbitrary and let $\pi_{\eps}$ be any deterministic strategy of player $\Diamond$ that has the following property: for every finite path $wu$ starting in $v$ and ending in $u\in V_{\Diamond}$, the transition $u \tran{} u'$ selected by $\pi_{\eps}(wu)$ satisfies $r(u) + \vec{K}_{u'} \leq \vec{K}_{u} + \eps/2^{|wu|+1}$. Then $\sup_{\sigma\in\HR_{\Box}}\Exp^{\sigma,\pi_{\eps}}_v(\Acc) \leq \vec{K}_v + \eps$.
\end{lemma}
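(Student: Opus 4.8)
The plan is to treat $\vec{K}$ as a \emph{potential function} and to show that, once Player~$\Diamond$ commits to $\pi_{\eps}$, the quantity ``reward accumulated so far plus the potential of the current vertex'' behaves like a supermartingale up to a summable additive error. Concretely, I would fix an arbitrary $\sigma\in\HR_{\Box}$, work in the probability space $(\run(v),\calF,\calP_v^{\sigma,\pi_{\eps}})$, and for a run $\omega$ and $n\geq 0$ set $M_n(\omega)=\sum_{i=0}^{n-1} r(\omega(i)) + \vec{K}_{\omega(n)}$ (with the empty sum equal to $0$, so that $M_0=\vec{K}_v$). If $\vec{K}_v=\infty$ the asserted bound is trivial, so I may assume $\vec{K}_v<\infty$.

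First I would verify that $M_n$ is well defined along every $\pi_{\eps}$-consistent run, so that the telescoping differences cause no $\infty-\infty$ issues. Because $\vec{K}$ is a fixed point of $L$, a finite value $\vec{K}_u$ at $u\in V_{\Box}$ forces $\vec{K}_{u'}\leq \vec{K}_u - r(u)<\infty$ for \emph{every} successor $u'$, and at $u\in V_{\bigcirc}$ every successor again has finite potential (all summands are non-negative and $\Prob(u)$ is positive). At $u\in V_{\Diamond}$ the defining inequality $r(u)+\vec{K}_{u'}\leq \vec{K}_u+\eps/2^{|wu|+1}$ guarantees only that the \emph{chosen} successor has finite potential, but that is the only one reachable under the deterministic $\pi_{\eps}$. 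A straightforward induction then gives $\vec{K}_{\omega(n)}<\infty$ for all $n$ on every run in the support.

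The heart of the argument is the one-step estimate: conditioned on the prefix $\omega(0),\dots,\omega(n)$, the increment $M_{n+1}-M_n=r(\omega(n))+\vec{K}_{\omega(n+1)}-\vec{K}_{\omega(n)}$ has conditional expectation at most $\eps/2^{n+1}$. This follows from the three-way case distinction using $\vec{K}=L(\vec{K})$. If $\omega(n)\in V_{\Box}$, then $\vec{K}_{\omega(n)}=r(\omega(n))+\sup_{\omega(n)\tran{}u'}\vec{K}_{u'}$ gives the \emph{pointwise} bound $r(\omega(n))+\vec{K}_{\omega(n+1)}-\vec{K}_{\omega(n)}\leq 0$ for any successor, so Player~$\Box$'s (possibly randomized) choice is irrelevant. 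If $\omega(n)\in V_{\bigcirc}$, the fixed-point equation says exactly that the conditional expectation of the increment is $0$. If $\omega(n)\in V_{\Diamond}$, the increment equals $r(\omega(n))+\vec{K}_{u'}-\vec{K}_{\omega(n)}\leq \eps/2^{n+1}$ by the defining property of $\pi_{\eps}$ (the path to $\omega(n)$ has length $n$). Taking unconditional expectations and telescoping yields $\Exp[M_n]\leq \vec{K}_v+\sum_{k=1}^{n}\eps/2^{k}<\vec{K}_v+\eps$ for every $n$.

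Finally, since $\vec{K}_{\omega(n)}\geq 0$ we have $M_n\geq \sum_{i=0}^{n-1} r(\omega(i))$, hence $\Exp[\sum_{i=0}^{n-1} r(\omega(i))]\leq \vec{K}_v+\eps$ for all $n$; letting $n\to\infty$ and invoking monotone convergence (the partial sums increase to $\Acc(\omega)$) gives $\Exp^{\sigma,\pi_{\eps}}_v(\Acc)\leq \vec{K}_v+\eps$, and taking the supremum over $\sigma\in\HR_{\Box}$ finishes the proof. I expect the main obstacle to be the bookkeeping around the value $\infty$: one must establish that the potential stays finite along $\pi_{\eps}$-runs \emph{before} manipulating the telescoping differences, and must phrase the one-step estimate as a genuine conditional-expectation inequality (generated by the cylinder filtration) so that both the telescoping and the concluding monotone-convergence step are rigorous.
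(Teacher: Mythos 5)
Your proof is correct, and it is in essence the same argument as the paper's, repackaged. The paper fixes a horizon $n$ and proves, by \emph{downward} induction on $k$, the conditional-expectation bound $\Exp^{\sigma,\pi_\eps}_{v}[\sum_{i=k}^{n}r(\omega(i))\mid\run(w)] \leq \vec{K}_{v_k} + \sum_{i=k}^{n}\eps/2^{i+1}$ for every path $w=v_0,\dots,v_k$, and then applies monotone convergence; your version runs the same one-step estimate \emph{forward}, as the supermartingale-with-error bound $\Exp[M_{n+1}]\leq\Exp[M_n]+\eps/2^{n+1}$ for $M_n=\sum_{i=0}^{n-1}r(\omega(i))+\vec{K}_{\omega(n)}$. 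The key ingredients are identical in both: the case analysis on the vertex type using $\vec{K}=L(\vec{K})$ (pointwise $\leq 0$ increment at $V_\Box$ regardless of randomization, exact averaging at $V_{\bigcirc}$, the $\eps/2^{|wu|+1}$ budget at $V_\Diamond$), the geometric error summing to at most $\eps$, and the concluding monotone-convergence step. The one substantive difference is the treatment of the value $\infty$: because you carry the potential $\vec{K}_{\omega(n)}$ explicitly in $M_n$, you must rule out $\infty-\infty$ in the increments, and your finiteness lemma (finite $\vec{K}_u$ propagates to all successors at $V_\Box$ and $V_{\bigcirc}$ vertices, and to the unique chosen successor at $V_\Diamond$ vertices) does this correctly; the paper's formulation sidesteps the issue entirely, since its inequality has $\vec{K}_{v_k}$ alone on the larger side and is vacuously true whenever $\vec{K}_{v_k}=\infty$. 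So your route costs a little extra bookkeeping but yields a cleaner probabilistic reading of why the strategy works; the paper's backward induction is slightly more economical on the infinite-value case. Note also that your argument, like the paper's, works verbatim for $\eps=0$, which is exactly what the application in Appendix~\ref{app:MD-opt-min} requires.
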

\begin{proof}

  We will prove that for every $v$, every $n \in \Nset_0$ and every strategy $\sigma$ of player $\Box$ we have $\Exp^{\sigma,\pi_{\eps}}_v (\sum_{i=0}^{n}\omega(i)) \leq \vec{K}_v + \eps$. By the monotone convergence theorem this means that $\Exp^{\sigma,\pi_\eps}_{v}(\Acc)\leq \vec{K}_v + \eps$ for every $\sigma$, and thus also $\sup_{\sigma\in\HR_{\Box}}\Exp^{\sigma,\pi_\eps}_{v}(\Acc)\leq \vec{K}_v + \eps$.

 So let us fix arbitrary $v$, $n$ and $\sigma$. Recall that $\Exp^{\sigma,\pi}_v [X| Y]$ denotes the conditional expectation of random variable $X$ given the event $Y$. We show that for every $0 \leq k \leq n$ and every finite path $w = v_0,\dots,v_k$ we have \[\Exp^{\sigma,\pi_\eps}_{v}[\sum_{i=k}^{n}r(\omega(i))~|~\run(w)] \leq \vec{K}_{v_k} + \sum_{i=k}^{n} \eps/2^{k+1}.\] In particular, this means that $\Exp^{\sigma,\pi_{\eps}}_v (\sum_{i=0}^{n}\omega(i)) = \E^{\sigma,\pi_\eps}_{v}[\sum_{i=0}^{n}r(\omega(i))~|~\run(v)] \leq \vec{K}_v + \eps$.
 
We proceed by downward induction on $k$. If $n=k$, then we trivially have
\begin{align*}
 \Exp^{\sigma,\pi_\eps}_{v}[\sum_{i=k}^{n}r(\omega(i))~|~ \run(w)] = r(v_k) \leq L(\vec{K})_{v_k} =\vec{K}_{v_k},
\end{align*}
where the inequality follows from the definition of $L$.

Now suppose that $k<n$. We distinguish two cases. If $v_k \in V_{\Diamond}$, denote by $u$ the successor of $v_k$ chosen by $\pi_\eps$. Then we have
\begin{align*}
 \Exp^{\sigma,\pi_\eps}_{v}[\sum_{i=k}^{n}r(\omega(i))~|~\run(w)] &= r({v_k}) + \Exp^{\sigma,\pi_\eps}[\sum_{i=k+1}^{n}r(\omega(i))~|~ \run(wu)] \\ &\leq r({v_k}) + \vec{K}_{u} + \sum_{i=k+1}^{n} \eps/2^{i+1} \\ &\leq \vec{K}_{v_k} + \sum_{i=k}^{n} \eps/2^{i+1},
\end{align*}
where the inequality on the second line follows from induction hypothesis and the inequality on the third line follows from the definition of $\pi_{\eps}$.

If $v_k \in V_{\Box} \cup V_{\bigcirc}$, then we can see that $\Exp^{\sigma,\pi_\eps}_{v}[\sum_{i=k}^{n}r(\omega(i))~|~ \run(w)] = \sum_{v_k\tran{}u} p_u \cdot \Exp^{\sigma,\pi_\eps}[\sum_{i=k+1}^{n}r(\omega(i))~|~ \run(wu)]$ for some sequence of real numbers $(p_u)_{v_k \tran{} u}$ s.t. $p_u \geq 0$ for every $u$ and $\sum_{v_k \tran{} u }p_u = 1$. By induction hypothesis we have $\Exp^{\sigma,\pi_\eps}[\sum_{i=k+1}^{n}r(\omega(i))~|~ \run(wu)] \leq K_{u} + \sum_{i=k+1}^{n} \eps/2^{i+1}$ for every $v_k \tran{} u$. Finally, from the definition of $L$ we obtain $\vec{K}_{v_k} = L(\vec{K})_{v_k}\geq \sum_{v_k\tran{} u}p_u\cdot{\vec{K}_u}$ (the inequality can be strict only if $v\in V_{\Box}$). Together, we have
\[
 \Exp^{\sigma,\pi_\eps}_{v}[\sum_{i=k}^{n}r(\omega(i))~|~ \run(w)] \leq \vec{K}_{v_k} + \sum_{i=k+1}^{n} \eps/2^{i+1} < \vec{K}_{v_k} + \sum_{i=k}^{n} \eps/2^{i+1}.
\]
\qed
\end{proof}
This finishes the proof of Lemma \ref{lem:hrdet}.

\section{Proof of Lemma \ref{lem:eps-HD-opt}}
\label{sec:app-HD}
\begin{reflemma}{lem:eps-HD-opt}
 For every $\eps>0$, there is $\sigma_{\eps} \in \HD_{\Box}$ such that
  $\sigma_{\eps}$ is $\eps$-$\HR$-optimal in every vertex.
\end{reflemma}


Let $\eps>0$ be arbitrary. It suffices to fix an arbitrary initial
   vertex $v$, define choices of the strategy $\sigma_\eps$ only on the
   finite paths starting in $v$ and verify, that the resulting strategy
   is $\eps$-$\HR$-optimal in $v$. By repeating this construction for
   every $v\in V$ we obtain a strategy that is $\eps$-$\HR$-optimal in
   every vertex.

For the sake of better readability, we first present the detailed construction of the deterministic $\eps$-$\HR$-optimal strategy $\sigma_{\eps}$ for games in which the $\HR$-value is finite in every vertex. Almost identical construction can be used for games with arbitrary $\HR$-values; there are some subtle technical differences that will be presented in the second part of the proof. 

We already know that the least fixed point $\mathbf{K}$ of the operator $L$ is equal to the vector of $\HR$-values. Moreover, from the standard results of the fixed-point theory (see, e.g., Theorem 5.1 in \cite{CC:fixed-point}) we know that $\mathbf{K}=L^{\alpha}(\mathbf{0})$ for some ordinal number $\alpha$ (where $\mathbf{0}$ is the vector of zeros and where the transfinite iteration of $L$ is defined in a standard way, i.e. we put $L^{\beta}(\mathbf{0})=\sup_{\gamma<\beta}L^{\gamma}(\mathbf{0})$ for every limit ordinal $\beta$). The following lemma is instrumental in the construction of $\sigma_{\eps}$.

\begin{lemma}
\label{lem:ord-labeling}
Let $\eps>0$ be arbitrary.
Denote by $\alpha$ the ordinal number $\alpha$ such that $L^{\alpha}(\mathbf{0})_v=\val_{\HR}(v)$ and denote by $\Ord_{\alpha}$ the set of all ordinal numbers lesser than or equal to $\alpha$. Then there is a labeling function $d\colon \fpath(v) \rightarrow \Ord_{\alpha}$ satisfying the following conditions:
\begin{enumerate}
 \item[(a)] $d(v)=\alpha$.
 \item[(b)] For every $wu\in\fpath(v)$ it holds either $d(w)=0$ or $d(wu)<d(w)$.
 \item[(c)] For every $wu\in\fpath(v)$, we have 
  \[
    L^{d(wu)}(\mathbf{0})_u - \frac{\eps}{2^{|wu|+1}} \leq \begin{cases}
                                                      r(u) + L^{d(wuu')}(\mathbf{0})_{u'}, \text{ for some } u \tran{} u' & \text{ if } u \in V_{\Box}\\
						      r(u) + \inf_{u\tran{} u'} L^{d(wuu')}(\mathbf{0})_{u'} & \text { if } u \in V_{\Diamond}\\
						      r(u) + \sum_{u \tran{} u'}\Prob(u)(u,u')\cdot L^{d(wuu')}(\mathbf{0})_{u'} & \text{ if } u \in V_{\bigcirc}.
                                                      \end{cases}
  \]
\end{enumerate}
\end{lemma}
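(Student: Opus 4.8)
The plan is to build $d$ by induction on the \emph{length} of the finite path, assigning to all one-step extensions of a given path a single common ordinal that is strictly smaller than the label of the path. Before starting I record two facts that drive the whole argument. First, since $L$ is monotone and $\mathbf{0}\vleq L(\mathbf{0})$, the iterates $(L^{\gamma}(\mathbf{0}))_{\gamma}$ form a $\vleq$-non-decreasing transfinite chain; in particular $\gamma\mapsto L^{\gamma}(\mathbf{0})_{u}$ is non-decreasing for each fixed $u$. Second, for a \emph{limit} ordinal $\beta$ the set $\{\gamma+1 : \gamma<\beta\}$ is cofinal in $\beta$, so $L^{\beta}(\mathbf{0})_u = \sup_{\gamma<\beta} L^{\gamma+1}(\mathbf{0})_u$, and unfolding one application of $L$ gives $L^{\beta}(\mathbf{0})_u = r(u) + \sup_{\gamma<\beta}\Theta_\gamma(u)$, where $\Theta_\gamma(u)$ is $\sup_{u\tran{}u'}L^{\gamma}(\mathbf{0})_{u'}$, $\inf_{u\tran{}u'}L^{\gamma}(\mathbf{0})_{u'}$, or $\sum_{u\tran{}u'}\Prob(u)(u,u')\,L^{\gamma}(\mathbf{0})_{u'}$ according to whether $u\in V_{\Box}$, $V_{\Diamond}$, or $V_{\bigcirc}$. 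Crucially, each $\Theta_\gamma(u)$ is itself non-decreasing in $\gamma$, so the supremum over $\gamma<\beta$ is the limit of a monotone net and can be approached to within any prescribed slack by a \emph{single} $\gamma<\beta$.

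I set $d(v)=\alpha$, giving (a). For the inductive step, suppose $wu\in\fpath(v)$ already carries a label $\beta=d(wu)$, and I define $d$ on the extensions $wuu'$. If $\beta=0$, I put $d(wuu')=0$ for every successor $u'$: condition (b) imposes nothing (its disjunct $d(wu)=0$ holds) and condition (c) is vacuous because its left-hand side is $L^{0}(\mathbf{0})_u - \eps/2^{|wu|+1} = -\eps/2^{|wu|+1}\le 0\le\text{RHS}$. If $\beta=\gamma+1$ is a successor, I put $d(wuu')=\gamma$ for \emph{all} successors; then $L^{\beta}(\mathbf{0})_u = r(u)+\Theta_\gamma(u)$ holds exactly, so for $u\in V_{\Diamond}\cup V_{\bigcirc}$ condition (c) holds with slack $0$, while for $u\in V_{\Box}$ it holds after choosing, as the witness the case statement requires, a successor $u'$ with $L^{\gamma}(\mathbf{0})_{u'}$ within $\eps/2^{|wu|+1}$ of $\Theta_\gamma(u)$. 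If $\beta$ is a limit, I use the displayed identity to pick a single $\gamma<\beta$ with $\Theta_\gamma(u)\ge \sup_{\delta<\beta}\Theta_\delta(u) - \eps/2^{|wu|+2}$, set $d(wuu')=\gamma$ for all successors, and (in the $\Box$ case only) additionally select the near-sup successor within a further $\eps/2^{|wu|+2}$; the two slacks add to $\eps/2^{|wu|+1}$, exactly what (c) permits.

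In every case the common successor label is some $\gamma<\beta$, so whenever $d(wu)\ne 0$ we have $d(wuu')<d(wu)$, which is (b); and (c) was checked branch by branch, the $\Box$, $\Diamond$, $\bigcirc$ subcases matching the three lines of the case statement. Since all labels lie in $\Ord_\alpha$ by construction, $d$ is a well-defined function $\fpath(v)\to\Ord_\alpha$, which completes the argument when every iterate is finite.

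The delicate point — and the one the statement implicitly flags — is a vertex with $L^{d(wu)}(\mathbf{0})_u=\infty$. Here the subtraction in (c) must be read through the $\ominus$ convention, so the left-hand side is the \emph{finite} quantity $\infty\ominus(\eps/2^{|wu|+1}) = 2^{|wu|+1}/\eps$. The matching selection then asks only for a $\gamma<\beta$ and a successor $u'$ whose iterate $L^{\gamma}(\mathbf{0})_{u'}$ exceeds this finite threshold, which is possible precisely because $\sup_{\gamma<\beta}\Theta_\gamma(u)=\infty$ forces the relevant sup, inf, or sum to surpass any bound cofinally below $\beta$. I expect this divergence-to-a-threshold bookkeeping, together with keeping the labels coherent across limit stages, to be the main obstacle; the zero and successor cases are routine, and the limit case for $V_{\Diamond}$ and $V_{\bigcirc}$ reduces cleanly to monotonicity of the net $\Theta_\gamma(u)$ without any genuine exchange of $\sup$ with $\inf$ or with the infinite sum.
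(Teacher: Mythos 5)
Your construction is correct and matches the paper's own proof essentially step for step: inductive labeling from shorter paths to longer ones, a common strictly smaller label for all one-step extensions, exact unfolding of $L$ at successor ordinals (with a near-sup witness in the $V_{\Box}$ case), and a cofinal approximation at limit ordinals with the slack split as $\eps/2^{|wu|+2}+\eps/2^{|wu|+2}$ (the paper reduces the limit case to a successor stage $\gamma=\beta+1$, while you argue directly via monotonicity of the net $\Theta_\gamma$ — a cosmetic difference). Your closing digression on infinite values is outside the scope of this lemma, which the paper invokes only when all $\HR$-values are finite; the infinite case is handled by the paper's separate Lemma~\ref{lem:ord-labeling-inf}, whose condition uses the finite threshold $1/\eps+\eps\cdot(|wu|+1)+F(w)$ rather than your $\infty\ominus$ reading, because the extra $F(w)$ term is needed downstream.
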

\begin{proof}
We define the labeling $d$ inductively, proceeding from the shorter paths to the longer ones. Obviously we set $d(v)=\alpha$. Now suppose that $d(wu)$ has already been defined. We will define $d(wuu')$ for all successors $u'$ of $u$ simultaneously. First let us assume that $d(wu)$ is a successor ordinal of the form $\beta+1$. Then it suffices to put $d(wuu')=\beta$ for all successors $u'$ of $u$. From the definition of $L$ we can easily see that for every $\delta>0$ it then holds
\[
    L^{\beta+1}(\mathbf{0})_u - \delta \leq \begin{cases}
                                                      r(u) + L^{\beta}(\mathbf{0})_{u'}, \text{ for some } u \tran{} u' & \text{ if } u \in V_{\Box}\\
						      r(u) + \inf_{u\tran{} u'} L^{\beta}(\mathbf{0})_{u'} & \text { if } u \in V_{\Diamond}\\
						      r(u) + \sum_{u \tran{} u'}\Prob(u)(u,u')\cdot L^{\beta}(\mathbf{0})_{u'} & \text{ if } u \in V_{\bigcirc},
                                                      \end{cases}
  \]

 so in particular the inequality in (c) holds for $wu$. 

Now let us assume that $d(wu)$ is a limit ordinal. Then $L^{d(wu)}(\mathbf{0})_u = \sup_{\gamma < d(wu)}L^{\gamma}(\mathbf{0})_u$. This means that there is $\gamma<d(wu)$ such that $L^{d(wu)}(\mathbf{0})_u - {\eps}/{2^{|wu|+2}}\leq L^{\gamma}(\mathbf{0})_u$. Clearly, we can assume that $\gamma=\beta+1$ fore some ordinal $\beta$. Now we again set $d(wuu')=\beta$ for all successors $u'$ of $u$. Using the argument from the previous paragraph with $\delta={\eps}/{2^{|wu|+2}}$ we obtain
\[
  L^{d(wu)}(\mathbf{0})_u - \frac{\eps}{2^{|wu|+1}} \leq L^{\gamma}(\mathbf{0})_u - \frac{\eps}{2^{|wu|+2}} \leq \begin{cases}
                                                      r(u) + L^{\beta}(\mathbf{0})_{u'}, \text{ for some } u \tran{} u' & \text{ if } u \in V_{\Box}\\
						      r(u) + \inf_{u\tran{} u'} L^{\beta}(\mathbf{0})_{u'} & \text { if } u \in V_{\Diamond}\\
						      r(u) + \sum_{u \tran{} u'}\Prob(u)(u,u')\cdot L^{\beta}(\mathbf{0})_{u'} & \text{ if } u \in V_{\bigcirc},
                                                      \end{cases}
\]
so (c) again holds for $wu$.

Finally, if $d(wu)=0$, then we set $d(wuu')=0$ for all successors $u'$ of $u$. In this way, we eventually define $d(w)$ for every finite path starting in $v$. It is obvious that $d$ satisfies (a)--(c).
\qed
\end{proof}

We use the labeling $d$ provided by the previous lemma to define the $\eps$-$\HR$-optimal $\HD$ strategy $\sigma_{\eps}$ of player $\Box$. For a given finite path $wu$ the strategy $\sigma_{\eps}$ selects a transition $u\tran{}u'$ such that $ L^{d(wu)}(\mathbf{0})_u - {\eps}/{2^{|wu|+1}} \leq r(u) + L^{d(wuu')}(\mathbf{0})_{u'}$. Such a transition always exists due to the previous lemma. We now prove that the strategy $\sigma_{\eps}$ is $\eps$-$\HD$-optimal in $v$. We will actually prove a more general statement, that we will reuse later. 

\begin{lemma}
\label{lem:eps-HD-strat}
 For every run $\omega$ denote by $\tau(\omega)$ the least $k$ such that $d(\omega(0),\dots,\omega(k))=0$ and denote by $S_k^{\tau}$ the random variable defined by $S_k^{\tau}(\omega)=\sum_{i=k}^{\tau(\omega)}r(\omega(i))$. Then the following holds for every $wu\in\fpath(v)$:
\begin{equation}
\label{eq:transfinite}
 \inf_{\pi\in\HR_{\Diamond}} \Exp^{\sigma_{\eps},\pi}_v[S_{|wu|}^{\tau}~|~\run(wu)] \geq L^{d(wu)}(\mathbf{0})_u - \frac{\eps}{2^{|wu|}}.
\end{equation}

In particular, we have
\[
   \inf_{\pi\in\HR_{\Diamond}} \Exp^{\sigma_{\eps},\pi}_v(\Acc) \geq  \inf_{\pi\in\HR_{\Diamond}} \Exp^{\sigma_{\eps},\pi}_v[S_0^{\tau}~|~\run(v)] \geq L^{\alpha}(\mathbf{0})_v - {\eps} = \val_{\HR}(v)-\eps.
\]
\end{lemma}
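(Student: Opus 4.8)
The plan is to establish \eqref{eq:transfinite} by well-founded (transfinite) induction on the ordinal $d(wu)$, and then read off the ``in particular'' part using non-negativity of rewards. Before starting the induction I would record the one structural fact that makes everything finite: along any run $\omega$, condition~(b) of Lemma~\ref{lem:ord-labeling} forces the sequence $d(\omega(0)), d(\omega(0),\omega(1)),\dots$ to strictly decrease as long as it is positive, and since the ordinals are well-ordered there is no infinite strictly decreasing chain. Hence $d$ reaches $0$ after finitely many steps on every run, so $\tau(\omega)<\infty$ surely and each $S_k^{\tau}$ is a genuine finite sum. This is also exactly what guarantees that the recursion underlying the induction bottoms out.

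For the base case $d(wu)=0$: since $\tau(\omega)$ is the least index at which $d$ vanishes and $d(wu)=0$, every run through $wu$ has $\tau(\omega)\le|wu|$, so $S_{|wu|}^{\tau}$ equals either $r(u)$ or the empty sum $0$; in both cases it is $\ge 0 = L^{0}(\mathbf{0})_u$, which is $\ge L^{d(wu)}(\mathbf{0})_u - \eps/2^{|wu|}$. For the inductive step I would assume $d(wu)=\beta>0$ and that \eqref{eq:transfinite} holds for every path with strictly smaller label. Because $\beta>0$ we have $\tau(\omega)>|wu|$ for runs through $wu$, hence $S_{|wu|}^{\tau} = r(u) + S_{|wu|+1}^{\tau}$; applying the tower property and conditioning on the first transition out of $u$ gives, for any $\pi$,
\[
\Exp^{\sigma_{\eps},\pi}_v[S_{|wu|}^{\tau} \mid \run(wu)] \;=\; r(u) + \sum_{u\tran{}u'} p_{u'} \cdot \Exp^{\sigma_{\eps},\pi}_v[S_{|wu|+1}^{\tau} \mid \run(wuu')],
\]
where $p_{u'}$ is the probability of $u\tran{}u'$ supplied by $\sigma_{\eps}$, $\pi$, or $\Prob(u)$ according to the owner of $u$. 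Crucially, condition~(b) gives $d(wuu')<\beta$ for every successor $u'$, so the induction hypothesis bounds each inner term by $L^{d(wuu')}(\mathbf{0})_{u'} - \eps/2^{|wu|+1}$ (note $|wuu'|=|wu|+1$).

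It then remains to combine these bounds using the correct local inequality for each vertex type. For $u\in V_{\Box}$, $\sigma_{\eps}$ moves deterministically to a successor $u^{*}$ with $L^{\beta}(\mathbf{0})_u - \eps/2^{|wu|+1}\le r(u)+L^{d(wuu^{*})}(\mathbf{0})_{u^{*}}$, so the sum collapses to the single $u^{*}$ term; for $u\in V_{\Diamond}$ I would use that a convex combination over successors is at least $\inf_{u\tran{}u'}L^{d(wuu')}(\mathbf{0})_{u'}$ together with the $\inf$-inequality of condition~(c); for $u\in V_{\bigcirc}$ I would apply the probabilistic inequality of condition~(c) directly. In each case the two error terms $\eps/2^{|wu|+1}$ add up to $\eps/2^{|wu|}$, yielding $\Exp^{\sigma_{\eps},\pi}_v[S_{|wu|}^{\tau}\mid\run(wu)]\ge L^{\beta}(\mathbf{0})_u - \eps/2^{|wu|}$ for every $\pi$, and taking the infimum over $\pi$ finishes \eqref{eq:transfinite}. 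The ``in particular'' statement follows by instantiating $wu=v$ (so $d(v)=\alpha$, $|v|=0$, and $L^{\alpha}(\mathbf{0})_v=\val_{\HR}(v)$ by Lemma~\ref{lem:hrdet}) and by observing $\Acc\ge S_0^{\tau}$ pointwise, since all rewards are non-negative.

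The step I expect to be most delicate is the bookkeeping of the conditional expectations, specifically commuting $\inf_{\pi}$ with the conditioning. In the $V_{\Box}$ case the move to $u^{*}$ is deterministic, so conditioning on $\run(wu)$ coincides with conditioning on $\run(wuu^{*})$ and the relevant part of $\pi$ lives entirely on extensions of $wuu^{*}$; one must argue that this localization legitimately converts the hypothesis' bound on $\inf_{\pi}\Exp^{\sigma_{\eps},\pi}_v[\,\cdot\mid\run(wuu^{*})]$ into the desired bound on $\inf_{\pi}\Exp^{\sigma_{\eps},\pi}_v[\,\cdot\mid\run(wu)]$. In the $V_{\Diamond}$ case the analogous subtlety is that the estimate must be derived for an \emph{arbitrary fixed} (possibly randomized) $\pi$ before the infimum is taken, which is why the convex-combination inequality is used pointwise in $\pi$ and the $\inf_{\pi}$ is passed only at the very end.
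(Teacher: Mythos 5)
Your proof is correct and takes essentially the same route as the paper's: transfinite induction on $d(wu)$, with the base case trivial by non-negativity of rewards, the inductive step split by the type of $u$ using condition~(b) of Lemma~\ref{lem:ord-labeling} to invoke the hypothesis on successors and condition~(c) for the local one-step inequality, the two errors $\eps/2^{|wu|+1}$ summing to $\eps/2^{|wu|}$, and the ``in particular'' part read off from $\Acc \geq S_0^{\tau}$ and $d(v)=\alpha$. Your explicit treatment of commuting $\inf_{\pi}$ with the conditioning (fixing $\pi$, using convexity over successors, and passing to the infimum only at the end) is precisely what the paper leaves implicit in its case~(2) remark that ``the first line is easy.''
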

\begin{proof}
We proceed by transfinite induction on $d(wu)$. If $d(wu)=0$, then the inequality \eqref{eq:transfinite} clearly holds. Now suppose that $d(wu)>0$ and that the inequality \eqref{eq:transfinite} holds for every $\beta<d(wu)$. We distinguish three cases depending on the type of $u$.
\begin{enumerate}
 \item[(1.)] $u \in V_{\Box}$. Denote by $u'$ the successor of $u$ selected by $\sigma_{\eps}(wu)$. Then we have
\begin{align*}
 \inf_{\pi\in\HR_{\Diamond}} \Exp^{\sigma_{\eps},\pi}_v[S_{|wu|}^{\tau}~|~\run(wu)] &= r(u) + \inf_{\pi\in\HR_{\Diamond}} \Exp^{\sigma_{\eps},\pi}_v[S_{|wuu'|}^{\tau}~|~\run(wuu')] \\
  &\geq r(u) + L^{d(wuu')}(\mathbf{0})_{u'} - \frac{\eps}{2^{|wu|+1}} \\&\geq  L^{d(wu)}(\mathbf{0})_u - \frac{\eps}{2^{|wu|}},
\end{align*}
where the second line follows from the induction hypothesis and from the fact that $d(wuu')<d(wu)$, and the third line follows from the definition of $\sigma_{\eps}$.

\item[(2.)] $u \in V_{\Diamond}$. Then we have
\begin{align*}
 \inf_{\pi\in\HR_{\Diamond}} \Exp^{\sigma_{\eps},\pi}_v[S_{|wu|}^{\tau}~|~\run(wu)] &= r(u) + \inf_{u \tran{} u'}\inf_{\pi\in\HR_{\Diamond}} \Exp^{\sigma_{\eps},\pi}_v[S_{|wuu'|}^{\tau}~|~\run(wuu')] \\ &\geq r(u)   + \inf_{u \tran{} u'} L^{d(wuu')}(\mathbf{0})_{u'}- \frac{\eps}{2^{|wu|+1}} \\&\geq L^{d(wu)}(\mathbf{0})_u - \frac{\eps}{2^{|wu|}},
\end{align*}
where the first line is easy, the second line again follows from the induction hypothesis and the third line follows from Lemma \ref{lem:ord-labeling}.

\item[(3.)] $u \in V_{\bigcirc}$. We denote by $u \tran{x} u'$ the fact that $\Prob(u)(u,u')=x$. We have
\begin{align*}
 \inf_{\pi\in\HR_{\Diamond}} \Exp^{\sigma_{\eps},\pi}_v[S_{|wu|}^{\tau}~|~\run(wu)] &= r(u) + \sum_{u \tran{x} u'} x\cdot \left(\inf_{\pi\in\HR_{\Diamond}}\Exp^{\sigma_{\eps},\pi}_v[S_{|wuu'|}^{\tau}~|~\run(wuu')]\right) \\ &\geq r(u)  + \bigg(\sum_{u \tran{x} u'} x\cdot L^{d(wuu')}(\mathbf{0})_{u'}\bigg) - \frac{\eps}{2^{|wu|+1}}\\&\geq L^{d(wu)}(\mathbf{0})_u - \frac{\eps}{2^{|wu|}},
\end{align*}
where again the second and the third line follows from induction hypothesis and Lemma \ref{lem:ord-labeling}, respectively.
\end{enumerate}
\qed
\end{proof}

It remains to show how to handle the case when there are vertices with infinite $\HR$-values. The idea is the same, but the proof is more technical. We need to slightly generalize the previous two lemmas. The following lemma generalizes Lemma \ref{lem:ord-labeling}. We denote by $\last{w}$ the last vertex on a nonempty path $w$.
\begin{lemma}
\label{lem:ord-labeling-inf}
 Under the assumptions of Lemma \ref{lem:ord-labeling} there exists a labeling function $d\colon \fpath(v) \rightarrow \Ord_{\alpha}$ 
satisfying the following conditions:
\begin{enumerate}
 \item[(a)] $d(v)=\alpha$.
 \item[(b)] For every $wu\in\fpath(v)$ it holds either $d(w)=0$ or $d(wu)<d(w)$.
 \item[(c)] For every $wu\in\fpath(v)$, such that $L^{d(wu)}(\mathbf{0})_u<\infty$, we have 
  \[
    L^{d(wu)}(\mathbf{0})_u - \frac{\eps}{2^{|wu|+1}} \leq \begin{cases}
                                                      r(u) + L^{d(wuu')}(\mathbf{0})_{u'}, \text{ for some } u \tran{} u' & \text{ if } u \in V_{\Box}\\
						      r(u) + \inf_{u\tran{} u'} L^{d(wuu')}(\mathbf{0})_{u'} & \text { if } u \in V_{\Diamond}\\
						      r(u) + \sum_{u \tran{} u'}\Prob(u)(u,u')\cdot L^{d(wuu')}(\mathbf{0})_{u'} & \text{ if } u \in V_{\bigcirc},
                                                      \end{cases}
  \]
and for every $wu\in\fpath(v)$, such that $L^{d(wu)}(\mathbf{0})_u=\infty$, we have
\[
    \frac{1}{\eps} + {\eps\cdot(|wu|+1)} + F(w)\leq \begin{cases}
                                                      r(u) + L^{d(wuu')}(\mathbf{0})_{u'}, \text{ for some } u \tran{} u' & \text{ if } u \in V_{\Box}\\
						      r(u) + \inf_{u\tran{} u'} L^{d(wuu')}(\mathbf{0})_{u'} & \text { if } u \in V_{\Diamond}\\
						      r(u) + \sum_{u \tran{} u'}\Prob(u)(u,u')\cdot L^{d(wuu')}(\mathbf{0})_{u'} & \text{ if } u \in V_{\bigcirc},
                                                      \end{cases}
\]
where $F(w)=\begin{cases}
            L^{d(w)}(\vec{0})_{\last{w}} & \text{if } w \text{ is nonempty and } L^{d(w)}(\vec{0})_{\last{w}}<\infty \\
	    0 & \text{otherwise}.
            \end{cases}
$
\end{enumerate}
\end{lemma}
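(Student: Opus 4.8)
The plan is to construct the labeling $d$ by the same recursion on path length used for Lemma~\ref{lem:ord-labeling}, now splitting the inductive step according to whether $L^{d(wu)}(\mathbf{0})_u$ is finite or infinite. I set $d(v)=\alpha$, and, assuming $d(wu)$ has been fixed, I define $d(wuu')$ for every successor $u'$ of $u$ simultaneously. Conditions (a) and (b) will be guaranteed by construction, since I will always assign all successors a \emph{common} ordinal strictly below $d(wu)$ when $d(wu)>0$ and the ordinal $0$ when $d(wu)=0$; hence the entire content of the step is to verify the applicable line of (c). Throughout I will use that the transfinite iterates $L^{\beta}(\mathbf{0})$ are $\vleq$-nondecreasing in $\beta$, which is what lets a single choice of $\beta$ serve all successors at once.

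For a path $wu$ with $L^{d(wu)}(\mathbf{0})_u<\infty$ the argument is verbatim the one of Lemma~\ref{lem:ord-labeling}. If $d(wu)=\beta+1$ I put $d(wuu')=\beta$ and invoke that $L^{\beta+1}(\mathbf{0})_u-\delta$ lies below $r(u)$ plus the $\sup$/$\inf$/weighted-sum of $L^{\beta}(\mathbf{0})$ over the successors of $u$ for every $\delta>0$, taking $\delta=\eps/2^{|wu|+1}$. If $d(wu)$ is a limit I first pick $\gamma=\beta+1<d(wu)$ with $L^{d(wu)}(\mathbf{0})_u-\eps/2^{|wu|+2}\leq L^{\gamma}(\mathbf{0})_u$ (reducing the witnessing ordinal to a successor exactly as before) and then apply the same one-step estimate with $\delta=\eps/2^{|wu|+2}$. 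If $d(wu)=0$ the finite line of (c) holds trivially because $L^{0}(\mathbf{0})_u=0$ and the right-hand side is nonnegative, and I set $d(wuu')=0$. The one thing to observe is that this case must now also be treated when some ancestor vertex carried value $\infty$; since (c) speaks only about $u$ and its successors, nothing from the history interferes.

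The substantive case is $L^{d(wu)}(\mathbf{0})_u=\infty$, where additive approximation is vacuous and must be replaced by meeting the finite threshold $M:=\tfrac{1}{\eps}+\eps\,(|wu|+1)+F(w)$; note that $F(w)$ is already determined once $d(wu)$ is fixed, so $M$ is a fixed finite number. If $d(wu)=\beta+1$ then $L^{\beta+1}(\mathbf{0})_u=\infty$, and expanding by the definition of $L$ shows that for $u\in V_{\Box}$ the value $\sup_{u\tran{}u'}L^{\beta}(\mathbf{0})_{u'}$ is infinite (as $r(u)<\infty$), so some $u^{*}$ has $L^{\beta}(\mathbf{0})_{u^{*}}\geq M$; for $u\in V_{\Diamond}$ \emph{every} successor already satisfies $L^{\beta}(\mathbf{0})_{u'}=\infty$, and for $u\in V_{\bigcirc}$ at least one does, so the relevant $\inf$ resp.\ sum is again infinite. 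In each case $d(wuu')=\beta$ for all $u'$ yields the required inequality. If $d(wu)$ is a limit I use $L^{d(wu)}(\mathbf{0})_u=\sup_{\gamma<d(wu)}L^{\gamma}(\mathbf{0})_u=\infty$ to choose a successor $\gamma=\beta+1<d(wu)$ with $L^{\beta+1}(\mathbf{0})_u\geq M+1$: if this value is still $\infty$ I fall back to the successor computation just described, and if it is finite I approximate the $\sup$ (only needed for $V_{\Box}$) within $1$ to obtain $u^{*}$ with $r(u)+L^{\beta}(\mathbf{0})_{u^{*}}\geq M$, the $\inf$ and weighted-sum cases holding exactly because they evaluate to $L^{\beta+1}(\mathbf{0})_u\geq M$. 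Setting $d(wuu')=\beta$ for all $u'$ then gives (c) while keeping $d(wuu')<d(wu)$.

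I expect the main obstacle to be bookkeeping rather than anything conceptual: ensuring that the slack absorbed in the $V_{\Box}$ $\sup$-approximation (where I deliberately overshoot $M$ by choosing $\gamma$ with $L^{\beta+1}(\mathbf{0})_u\geq M+1$) is handled uniformly, and that the common label $\beta$ assigned to all successors is simultaneously $<d(wu)$ and large enough to witness the infinite value — both of which lean on monotonicity of $L^{\beta}(\mathbf{0})$ in $\beta$. One should also note the benign interaction with $F$: because $F(w)$ records the previous vertex's value only when it is finite, the threshold $M$ never becomes $\infty$, so a finite target is always available; the growth of $M$ with $|wu|$ plays no role in the present construction and is exploited only in the later value estimate generalizing Lemma~\ref{lem:eps-HD-strat}.
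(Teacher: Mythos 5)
Your proposal is correct and follows essentially the same route as the paper's proof: the same path-length recursion assigning one common label to all successors, the same successor/limit-ordinal case split with the finite case handled verbatim as in Lemma~\ref{lem:ord-labeling}, and the same trick of overshooting the finite threshold at limit ordinals (the paper picks $L^{\beta+1}(\mathbf{0})_u \geq 2/\eps + \eps\cdot(|wu|+1) + F(w)$ and then loses $\delta = 1/\eps$, whereas you pick $M+1$ and lose $1$ --- functionally identical). One tiny remark: in the $V_{\bigcirc}$ successor-ordinal case your parenthetical justification that at least one successor has infinite value can fail for infinitely-branching stochastic vertices (the weighted sum may diverge with every term finite), but this is harmless, since the inequality you need follows directly from $L^{\beta+1}(\mathbf{0})_u = \infty$ together with $r(u) < \infty$.
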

\begin{proof}
 We again define the function $d$ inductively, starting by putting $d(v)=\alpha$. Now let $wu$ be an arbitrary finite path such that $L^{d(wu)}(\vec{0})_u = \infty$. If $d(wu)=\beta+1$ for some ordinal $\beta$, then we can put $d(wuu')=\beta$ for all successors $u'$ of $u$. From the definition of $L$ it then easily follows that the inequality in (c) holds for $wu$. (For example, if $u\in V_{\Box}$, then we have $\infty=r(u)+\sup_{u\tran{}u'
}L^{\beta}(\vec{0})_{u'}$ and there is surely $u\tran{}u'$ s.t. $r(u)+L^{\beta}(\vec{0})_{u'}\geq 1/\eps+\eps\cdot(|wu|+1)+F(w)$. It is of course possible that $L^{\beta}(\vec{0})_{u'}=\infty$.)

If $d(wu)$ is an limit ordinal, then there is a successor ordinal $\beta+1<d(wu)$ s.t. $L^{\beta+1}(\vec{0})_{u} \geq 2/\eps + \eps\cdot(|wu|+1)+F(w)$. We set $d(wuu')=\beta$ for all successors $u'$ of $u$. If $L^{\beta+1}(\vec{0})_{u}=\infty$, then from the previous paragraph we get that (c) holds for $wu$. If $L^{\beta+1}(\vec{0})_{u} <\infty$, then the same argument as in the proof of Lemma~\ref{lem:ord-labeling} shows, that for every $\delta>0$ the right-hand side of the inequality in (c) is $\delta$-close to $L^{\beta+1}(\vec{0})_0$. If we set $\delta=1/\eps$, we get that (c) holds for $wu$.

For $wu$ with $L^{d(wu)}(\vec{0})_u<\infty$ we can use the same construction as in the Lemma~\ref{lem:ord-labeling}.
\qed
\end{proof}

For every $wu$ let us set
\[
 A_{\eps}^{wu} = \begin{cases}
                  L^{d(wu)}(\mathbf{0})_u - \frac{\eps}{2^{|wu|+1}} & \text{ if } L^{d(wu)}(\mathbf{0})_u < \infty\\
		   \frac{1}{\eps} + {\eps\cdot(|wu|+1)} + F(w)& \text{otherwise},
                 \end{cases}
\]

and 
\[
 B_{\eps}^{wu} = \begin{cases}
                  L^{d(wu)}(\mathbf{0})_u - \frac{\eps}{2^{|wu|}} & \text{ if } L^{d(wu)}(\mathbf{0})_u < \infty\\
		   \frac{1}{\eps} + {\eps\cdot|wu|} + F(w)& \text{otherwise}.
                 \end{cases}
\]
Note that $A_{\eps}^{wu}-\delta\geq B_{\eps}^{wu}$ for every $0\leq \delta\leq \eps/2^{|wu|+1}$.
We now define the $\eps$-$\HR$-optimal deterministic strategy $\sigma_{\eps}$ as follows: for a given $wu\in\fpath(v)$, the $\sigma(wu)$ selects a transition $u\tran{}u'$ such that $A_{\eps}^{wu} \leq r(u) + L^{d(wuu')}(\mathbf{0})_{u'}$. It remains to prove that $\sigma_{\eps}$ is $\eps$-$\HR$-optimal in $v$. We generalize Lemma \ref{lem:eps-HD-strat} as follows:

\begin{lemma}
 The following holds for every $wu\in\fpath(v)$:
\begin{equation}
 \label{eq:transfinite-inf}
 \inf_{\pi\in\HR_{\Diamond}} \Exp^{\sigma_{\eps},\pi}_v[S_{|wu|}^{\tau}~|~\run(wu)] \geq B_{\eps}^{wu}.
\end{equation}
\end{lemma}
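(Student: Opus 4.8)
The plan is to prove \eqref{eq:transfinite-inf} by transfinite induction on the ordinal label $d(wu)$, following the proof of Lemma~\ref{lem:eps-HD-strat} verbatim in spirit but with the single quantity $L^{d(wu)}(\mathbf{0})_u - \eps/2^{|wu|}$ replaced by the two-regime surrogate $B_{\eps}^{wu}$. The base case $d(wu)=0$ is immediate: the summands of $S_{|wu|}^{\tau}$ are non-negative, so $S_{|wu|}^{\tau}\geq 0$, while $L^{0}(\mathbf{0})_u=0$ is finite and hence $B_{\eps}^{wu}=-\eps/2^{|wu|}<0$. For the inductive step $d(wu)>0$ I would unfold the conditional expectation by one step, using that $\sigma_{\eps}$ is deterministic. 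This produces the three familiar shapes according to the type of $u$: for $u\in V_{\Box}$ the expectation passes to the single successor $u'$ selected by $\sigma_{\eps}(wu)$; for $u\in V_{\Diamond}$ it becomes an infimum over successors (the direction in which $\inf_{\pi}$ is harmless, since a worse choice of $\Diamond$ only raises the reward); for $u\in V_{\bigcirc}$ it becomes a $\Prob(u)$-weighted sum. Because property~(b) of Lemma~\ref{lem:ord-labeling-inf} guarantees $d(wuu')<d(wu)$ for every successor, the induction hypothesis applies to each continuation and yields $\inf_{\pi\in\HR_{\Diamond}}\Exp^{\sigma_{\eps},\pi}_v[S_{|wuu'|}^{\tau}\mid\run(wuu')]\geq B_{\eps}^{wuu'}$.

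The rest is arithmetic bookkeeping: in each case I must check that $r(u)$ plus the corresponding combination of the $B_{\eps}^{wuu'}$ dominates $B_{\eps}^{wu}$. When $L^{d(wu)}(\mathbf{0})_u<\infty$ the argument is the one already used in Lemma~\ref{lem:eps-HD-strat}. The selection rule of $\sigma_{\eps}$ (for $V_{\Box}$) together with property~(c) of Lemma~\ref{lem:ord-labeling-inf} (for $V_{\Diamond}$ and $V_{\bigcirc}$) gives $A_{\eps}^{wu}\leq r(u)+(\text{the relevant combination of } L^{d(wuu')}(\mathbf{0})_{u'})$; substituting, for each \emph{finite} successor value, $B_{\eps}^{wuu'}=L^{d(wuu')}(\mathbf{0})_{u'}-\eps/2^{|wu|+1}$ costs exactly $\delta=\eps/2^{|wu|+1}$, and the recorded inequality $A_{\eps}^{wu}-\delta\geq B_{\eps}^{wu}$ closes it. A successor carrying an \emph{infinite} $L$-value only helps in this regime, since its $B_{\eps}^{wuu'}=\tfrac1\eps+\eps(|wu|+1)+F(wu)$ contains $F(wu)=L^{d(wu)}(\mathbf{0})_u$, i.e.\ the very quantity being bounded, plus strictly positive slack, so $r(u)+B_{\eps}^{wuu'}\geq L^{d(wu)}(\mathbf{0})_u\geq B_{\eps}^{wu}$.

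The genuinely delicate regime, which I expect to be the main obstacle, is $L^{d(wu)}(\mathbf{0})_u=\infty$, where $B_{\eps}^{wu}=\tfrac1\eps+\eps|wu|+F(w)$ is a finite surrogate that must increase coherently along the path even though $F$ is reset to $0$ on infinite-valued vertices. If the successor dictated by property~(c) has finite $L$-value, the estimate is clean: property~(c) forces $A_{\eps}^{wu}=\tfrac1\eps+\eps(|wu|+1)+F(w)\leq r(u)+L^{d(wuu')}(\mathbf{0})_{u'}$, hence $r(u)+B_{\eps}^{wuu'}\geq A_{\eps}^{wu}-\eps/2^{|wu|+1}\geq B_{\eps}^{wu}$ because $\eps(|wu|+1)-\eps/2^{|wu|+1}\geq\eps|wu|$, and crucially the term $F(w)$ is carried intact through this finite $L$-value so no promised reward is lost. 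The hard sub-case is when the only admissible successor has \emph{infinite} $L$-value: then $F(wu)=0$, $B_{\eps}^{wuu'}=\tfrac1\eps+\eps(|wu|+1)$, and the naive telescoping would need $r(u)+\eps\geq F(w)$, which is not available directly. Closing this sub-case is exactly where the fine structure of the labeling built in Lemma~\ref{lem:ord-labeling-inf} must be invoked — in particular the choice, at limit ordinals, of $\beta+1<d(wu)$ with $L^{\beta+1}(\mathbf{0})_u\geq \tfrac2\eps+\eps(|wu|+1)+F(w)$, and the handling of successor ordinals — to ensure that descent to $d(wuu')$ never strands the accumulated $F(w)$, so that the surrogate bounds telescope along the finitely many steps before $\tau$. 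I would finally verify that the $V_{\Diamond}$ and $V_{\bigcirc}$ variants collapse to these same two estimates after pushing the infimum, respectively the $\Prob(u)$-weighted sum, inside and using the monotonicity $L^{\beta}(\mathbf{0})\vleq L^{d(wu)}(\mathbf{0})$ for $\beta<d(wu)$ to keep all $F$-terms consistent. Specialising to $wu=v$, where $d(v)=\alpha$ and $B_{\eps}^{v}=\val_{\HR}(v)\ominus\eps$ in both regimes, then gives $\inf_{\pi\in\HR_{\Diamond}}\Exp^{\sigma_{\eps},\pi}_v(\Acc)\geq\val_{\HR}(v)\ominus\eps$, i.e.\ the $\eps$-$\HR$-optimality of $\sigma_{\eps}$.
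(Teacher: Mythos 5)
Your proposal reproduces the paper's argument faithfully in every part it actually carries out: the same transfinite induction on $d(wu)$, the same base case, the same finite-regime computation (including the delicate $V_\Diamond$ sub-case where some successor has infinite $L$-value, which you resolve exactly as the paper does, via $F(wu)=L^{d(wu)}(\mathbf{0})_u$), and the infinite-parent/finite-successor estimate, which is the paper's \eqref{eq:hd2}. The genuine gap is that you never close the one sub-case you yourself single out as the main obstacle: $L^{d(wu)}(\mathbf{0})_u=\infty$ with $L^{d(wuu')}(\mathbf{0})_{u'}=\infty$. Writing that ``the fine structure of the labeling must be invoked'' is a promissory note, not an argument, and the case cannot be dodged: the configuration $F(w)>0$ with $u$ infinite-valued arises precisely after Player~$\Diamond$ moves from a finite-valued vertex to an infinite-valued successor (only $\Diamond$-vertices admit such a transition, since at $\Box$- and stochastic vertices a finite $L$-value forces all successors at the smaller label to be finite-valued), and the bound \eqref{eq:transfinite-inf} at exactly such paths is what your own finite-regime $V_\Diamond$ sub-case consumes. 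As written, your induction step establishes nothing once $F(w)>\eps$, since, as you correctly compute, the naive telescoping would need $r(u)+\eps\geq F(w)$.

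It is worth knowing that your diagnosis here is sharper than the paper's own text: in \eqref{eq:hd1} the paper asserts $B_{\eps}^{wuu'}=1/\eps+\eps|wu|+\eps+F(w)$, whereas the stated definition gives $B_{\eps}^{wuu'}=1/\eps+\eps(|wu|+1)+F(wu)$ with $F(wu)=0$ (because $L^{d(wu)}(\vec{0})_u=\infty$), so \eqref{eq:hd1} as printed is valid only when $F(w)=0$. The repair, implicit in the paper's computation, is to give $F$ carry-forward semantics: let $F(w)$ be the $L^{d(\cdot)}$-value at the last \emph{finite}-valued position along $w$ (unchanged across infinite-valued vertices, $0$ if none exists) instead of resetting it to $0$. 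With this modification Lemma~\ref{lem:ord-labeling-inf} still goes through --- at limit ordinals one can pick $\beta+1<d(wu)$ with $L^{\beta+1}(\vec{0})_u\geq 2/\eps+\eps(|wu|+1)+F(w)$, since the supremum is $\infty$ --- and both \eqref{eq:hd1} and \eqref{eq:hd2} become correct, so the promise $F(w)$ telescopes along infinite-valued stretches exactly as your proof requires. To complete your proposal you must make this (or an equivalent) modification explicit and re-verify conditions (a)--(c) of the labeling under it; without that, the sub-case you flagged remains open and the lemma is not proved.
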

\begin{proof}

 The proof again proceeds by transfinite induction on $d(wu)$. The base case is the same as in Lemma \ref{lem:eps-HD-strat}, because if $d(wu)=0$, then $B_{\eps}^{wu}=-\frac{\eps}{2^{|wu|+1}}$. So assume that $d(wu)>0$ and that \eqref{eq:transfinite-inf} hols for all $\alpha<d(wu)$. If $L^{d(wu)}(\mathbf{0})_u < \infty$, then we can basically proceed in exactly the same way as in the Lemma \ref{lem:eps-HD-strat}. The only difference here is the case when $u \in V_{\Diamond}$, $L^{d(wu)}(\mathbf{0})_u < \infty$ and $L^{d(wuu')}(\mathbf{0})_{u'} = \infty$ for some $u \tran{} u'$. But in this case we have $\Exp^{\sigma_{\eps},\pi}_v[S_{|wuu'|}^{\tau}~|~\run(wuu')]\geq  B^{wuu'}_{\eps} > 1/\eps + F(wu) = 1/\eps + L^{d(wu)}(\vec{0})_u \geq 1/\eps + \inf_{u \tran{} u'}L^{d(wuu')}(\vec{0})_{u'}$, so the computation in part (2.) of the proof of Lemma \ref{lem:eps-HD-strat} is still valid.


 If $L^{d(wu)}(\mathbf{0})_u = \infty$, then we consider the following cases:

\begin{enumerate}
 \item[(1.)] $u \in V_{\Box}$. Denote by $u'$ the successor of $u$ selected by $\sigma_{\eps}(wu)$. Then 
\begin{align*}
 \inf_{\pi\in\HR_{\Diamond}} \Exp^{\sigma_{\eps},\pi}_v[S_{|wu|}^{\tau}~|~\run(wu)] &= r(u) + \inf_{\pi\in\HR_{\Diamond}} \Exp^{\sigma_{\eps},\pi}_v[S_{|wuu'|}^{\tau}~|~\run(wuu')] \\
  &\geq r(u) + B_{\eps}^{wuu'},
\end{align*}
where the second line comes from the induction hypothesis. There are two possibilities. Either \begin{equation}\label{eq:hd1}B_{\eps}^{wuu'} = 1/\eps + \eps\cdot{|wu|}+\eps+ F(w)> 1/\eps + \eps\cdot{|wu|} +F(w)=B_{\eps}^{wu},\end{equation} or \begin{equation}\label{eq:hd2} r(u) + B_{\eps}^{wuu'} = r(u) +  L^{d(wuu')}(\mathbf{0})_{u'} - \frac{\eps}{2^{|wu|+1}} \geq A_{\eps}^{wu}- \frac{\eps}{2^{|wu|+1}} \geq B_{\eps}^{wu},\end{equation} where the second inequality follows from Lemma \ref{lem:ord-labeling-inf} and from the definition of $\sigma_{\eps}$. In both cases the equation \eqref{eq:transfinite-inf} holds.

\item[(2.)] $u \in V_{\Diamond}$. Then we have
\begin{align*}
 \inf_{\pi\in\HR_{\Diamond}} \Exp^{\sigma_{\eps},\pi}_v[S_{|wu|}^{\tau}~|~\run(wu)] &= r(u) + \inf_{u \tran{} u'}\inf_{\pi\in\HR_{\Diamond}} \Exp^{\sigma_{\eps},\pi}_v[S_{|wuu'|}^{\tau}~|~\run(wuu')] \\ &\geq \inf_{u \tran{} u'}\left(r(u) +  B_{\eps}^{wuu'}\right).
\end{align*}
Exactly the same computation as in the case (1.) reveals that \eqref{eq:hd1} or \eqref{eq:hd2} holds for all $u \tran{} u'$, and thus for all these transitions we have $r(u) + B_{\eps}^{wuu'}\geq B_{\eps}^{wu}$. Thus, $\inf_{u \tran{} u'}\left(r(u) +  B_{\eps}^{wuu'}\right)\geq B_{\eps}^{wu}$ and \eqref{eq:transfinite-inf} holds for $wu$.

\item[(3.)] $u \in V_{\bigcirc}$. Then again from the induction hypothesis it follows that
\begin{align*}
 \inf_{\pi\in\HR_{\Diamond}} \Exp^{\sigma_{\eps},\pi}_v[S_{|wu|}^{\tau}~|~\run(wu)] &= r(u) + \sum_{u \tran{x} u'} x\cdot \left(\inf_{\pi\in\HR_{\Diamond}}\Exp^{\sigma_{\eps},\pi}_v[S_{|wuu'|}^{\tau}~|~\run(wuu')]\right) \\ &\geq \sum_{u \tran{x} {u'}}x\cdot\left(r(u) + B_{\eps}^{wuu'}\right )\geq B_{\eps}^{wu},
\end{align*}
where the last inequality can be justified in exactly the same way as in the previous two cases.
\end{enumerate}
\qed
\end{proof}

\section{Proof of Lemma~\ref{lem-max-bound}}
\label{app:nstep}
\begin{reflemma}{lem-max-bound} 
If $G$ is $\Diamond$-finitely-branching, then for every 
$v \in V$ there is $n\in \Nset$ such that
\begin{equation}
  \label{eq:max-bound}
  \adjustlimits\sup_{\sigma\in\HR_{\Box}} \inf_{\pi\in\HR_{\Diamond}} \Exp^{\sigma,\pi}_v(\Acc_n) 
  \quad > \quad \val_{\HR}(v) \ominus \frac{\varepsilon}{4}
\end{equation}
\end{reflemma}

Let $v\in V$ be arbitrary.
Without loss of generality, we can assume that $v\in V_{\bigcirc}$ and that $v$ has only one outgoing transition. If this is not the case, we can simply add a new stochastic vertex $v'$ with a zero reward and a single new transition $v\tran{} v'$. It is clear, that if the statement of the lemma holds for $v'$ in this new game, then it holds for $v$ in the original game.


Observe that if every vertex of player $\Diamond$ has only finitely many successors, then the operator $L$ is Scott-continuous.

\begin{lemma}
Let $D \subseteq (\Rset^{\geq0}_{\infty})^V$ be an arbitrary directed set (i.e. such a set that each pair of elements in $D$ has an upper bound in $D$.) Then $L(\sup_{\vec{d}\in D}\vec{d}) = \sup_{\vec{d} \in D}L(\vec{d})$.
\end{lemma}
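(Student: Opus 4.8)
The plan is to prove the two inequalities constituting $\vleq$ separately, componentwise. One direction is free: since $L$ is monotone (as already observed above), every $\vec{d}\in D$ satisfies $\vec{d}\vleq \sup_{\vec{d}'\in D}\vec{d}'$ and hence $L(\vec{d}) \vleq L(\sup_{\vec{d}'\in D}\vec{d}')$; taking the supremum over $\vec{d}\in D$ yields $\sup_{\vec{d}\in D} L(\vec{d}) \vleq L(\sup_{\vec{d}\in D}\vec{d})$. So the real work is the reverse inequality $L(\sup_{\vec{d}\in D}\vec{d})_v \leq (\sup_{\vec{d}\in D}L(\vec{d}))_v$ for each $v\in V$. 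Writing $\vec{s} := \sup_{\vec{d}\in D}\vec{d}$ (which exists because $((\Rni)^{V},\vleq)$ is a complete lattice, and is computed componentwise), I would split according to the type of $v$, since the three clauses of $L$ behave differently.

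For $v\in V_{\Box}$ the claim reduces to $\sup_{v\tran{}v'}\sup_{\vec{d}\in D}\vec{d}_{v'} = \sup_{\vec{d}\in D}\sup_{v\tran{}v'}\vec{d}_{v'}$, i.e. to interchanging two suprema, which holds unconditionally (after adding the common summand $r(v)$). For $v\in V_{\bigcirc}$ I would establish $\sum_{v\tran{}v'}(\sup_{\vec{d}\in D}\vec{d}_{v'})\cdot\Prob(v)(v,v') \leq \sup_{\vec{d}\in D}\sum_{v\tran{}v'}\vec{d}_{v'}\cdot\Prob(v)(v,v')$ by fixing a real $c$ strictly below the left-hand side, truncating the (possibly infinite) non-negative sum to a finite set $F$ of successors whose partial sum still exceeds $c$, approximating $\sup_{\vec{d}\in D}\vec{d}_{v'}$ for each of the finitely many $v'\in F$ by some $\vec{d}^{(v')}\in D$, and invoking directedness to obtain a single $\vec{d}^{\ast}\in D$ dominating all these $\vec{d}^{(v')}$; then the $F$-partial sum of $\vec{d}^{\ast}$ already beats $c$. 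Neither of these two cases uses finite branching.

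The crux is $v\in V_{\Diamond}$, where the claim becomes $\inf_{v\tran{}v'}\sup_{\vec{d}\in D}\vec{d}_{v'} \leq \sup_{\vec{d}\in D}\inf_{v\tran{}v'}\vec{d}_{v'}$, i.e. the nontrivial direction of exchanging an infimum with a supremum. Here I would use that $v$ has only finitely many successors $v'_1,\dots,v'_m$ (this is exactly where $\Diamond$-finite-branching enters). Given any finite $c$ strictly below $\inf_j\sup_{\vec{d}\in D}\vec{d}_{v'_j}$, for each $j$ there is $\vec{d}^{(j)}\in D$ with $\vec{d}^{(j)}_{v'_j} > c$; directedness applied to the finite family $\{\vec{d}^{(1)},\dots,\vec{d}^{(m)}\}$ yields a common upper bound $\vec{d}^{\ast}\in D$ for which $\vec{d}^{\ast}_{v'_j} > c$ simultaneously for all $j$, whence $\inf_j \vec{d}^{\ast}_{v'_j} > c$ because this is a minimum of finitely many values each exceeding $c$. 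Letting $c$ approach the left-hand side finishes the case, including the subcase where that value is $\infty$. I expect this case to be the main obstacle: finiteness of $m$ is used twice and is genuinely necessary, since with infinitely many successors a supremum of vectors may have all coordinates large while every individual vector has some small coordinate, so that an infimum of infinitely many values each above $c$ can collapse to $c$ and break the exchange. Combining the three cases gives $L(\vec{s})\vleq \sup_{\vec{d}\in D}L(\vec{d})$, which together with the free direction yields the desired equality.
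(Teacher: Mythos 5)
Your proof is correct and takes essentially the same approach as the paper's: the monotonicity direction is free, and the componentwise case split over $V_{\Box}$, $V_{\Diamond}$, $V_{\bigcirc}$ with directedness plus $\Diamond$-finite-branching doing the real work at minimizing vertices matches the paper exactly, your direct $c$-approximation argument at $v\in V_{\Diamond}$ being merely the contrapositive of the paper's argument by contradiction. The only deviation is that you spell out the finite-truncation-plus-directedness argument for the stochastic case, which the paper dismisses as trivial; that is a harmless (indeed welcome) elaboration rather than a different route.
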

\begin{proof}
 The inequality $\geq$ follows immediately from the monotonicity of $L$. So it suffices to prove that for every directed set $D$ and every vertex $v$ we have $L(\sup_{\vec{d}\in D}\vec{d})_v \leq \sup_{\vec{d} \in D}L(\vec{d})_v$. 
Note that $(\sup_{\vec{d} \in D}\vec{d})_v = \sup_{\vec{d}\in D}\vec{d}_v$. We consider three cases:
\begin{itemize}
 \item[(1.)] $v \in V_{\Box}$. Then we trivially have
\begin{align*}
 L(\sup_{\vec{d}\in D}\vec{d})_v=\adjustlimits\sup_{v \tran{} v'} \sup_{\vec{d} \in D}\vec{d}_{v'} =  \adjustlimits\sup_{\vec{d} \in D}\sup_{v \tran{} v'}\vec{d}_{v'} = \sup_{\vec{d} \in D}L(\vec{d})_v.
\end{align*}
\item[(2.)] $v \in V_{\Diamond}$. Assume, for the sake of contradiction, that $\inf_{v\tran{}v'}\sup_{\vec{d}\in D}\vec{d}_{v'} > \sup_{\vec{d} \in D} \inf_{v\tran{}v'}\vec{d}_{v'}$. Then for each of the finitely many transitions $v\tran{}v'$ there is a vector $\vec{d}(v')\in D$ such that $\vec{d}(v')_{v'}>\sup_{\vec{d} \in D} \inf_{v\tran{}v'}\vec{d}_{v'}$. But since the set $D$ is directed and there are only finitely many $v \tran{} v'$, there is a vector  $\vec{d}^* \in D$ such that $\vec{d}(v')\vleq \vec{d}^*$ for every successor $v'$ of $v$. We thus have
\begin{align*}
 \adjustlimits\sup_{\vec{d}\in D}\inf_{v\tran{}{v'}} \vec{d}_{v'} \geq \inf_{v\tran{}v'} \vec{d}^*_{v'} \geq \inf_{v\tran{}v'} \vec{d}(v')_{v'} >  \adjustlimits\inf_{v\tran{}v'}\sup_{\vec{d} \in D} \inf_{v\tran{}v'}\vec{d}_{v'} = \adjustlimits\sup_{\vec{d} \in D} \inf_{v\tran{}v'}\vec{d}_{v'},
\end{align*}
a contradiction. (Above, the second inequality follows from the fact that $\vec{d}(v')\vleq \vec{d}^*$ for every $v'$ and the first inequality and the last equality are trivial. The third inequality is strict because there are only finitely many successors of $v$.)
\item[(3.)] $v\in V_{\bigcirc}$. Then we again trivially have 
\[
 L(\sup_{\vec{d}\in D}\vec{d})_v=\sum_{v \tran{} v'}\Prob(v)(v,v')\cdot\sup_{\vec{d} \in D}\vec{d}_{v'} = \sup_{\vec{d}\in D}\sum_{v \tran{} v'}\Prob(v)(v,v')\cdot\vec{d}_{v'} = \sup_{\vec{d} \in D}L(\vec{d})_v.
\]
\end{itemize}
\qed
\end{proof}

From the Kleene fixed-point theorem it follows that $L^{\omega}(\vec{0})=\mathbf{K}$, i.e. that the ordinal number $\alpha$ from Lemmas \ref{lem:ord-labeling} and \ref{lem:ord-labeling-inf} can be assumed to be equal to $\omega$.
Fix a labeling $d$ of finite paths starting in $v$ that satisfies the conditions (a)--(c) in Lemma \ref{lem:ord-labeling} (or Lemma \ref{lem:ord-labeling-inf}, if there are some vertices with infinite $\HR$-value). Then $v$ is labeled by $\omega$ and all other elements of $\fpath(v)$ are labeled with nonnegative integers. Recall that $\tau(\omega)$ denotes the least $k$ such that $d(\omega(0),\dots,\omega(k))=0$.

Now let $u$ be the unique successor of $v$. We set $n=d(vu)+1$. To see that this $n$ satisfies \eqref{eq:max-bound}, consider the deterministic $(\eps/8)$-$\HR$-optimal strategy $\sigma_{\eps/8}$ constructed in the proof of Lemma \ref{lem:eps-HD-opt}. From Lemma \ref{lem:ord-labeling} (or Lemma \ref{lem:ord-labeling-inf}) it follows that 
\begin{equation*}\inf_{\pi\in\HR_{\Diamond}} \Exp^{\sigma_{\eps/8},\pi}_v[\sum_{i=0}^{\tau(\omega)}r(\omega(i))~|~\run(v)] \geq \val_{\HR} \ominus \frac{\eps}{8}.\end{equation*} 

But now we clearly have $\tau(\omega)\leq n=d(vu)+1$ for all runs $\omega$ starting in $v$. Thus, we have
\begin{equation*}
 \inf_{\pi\in\HR_{\Diamond}} \Exp^{\sigma_{\eps/8},\pi}_v(\Acc_n)\geq\inf_{\pi\in\HR_{\Diamond}} \Exp^{\sigma_{\eps/8},\pi}_v[\sum_{i=0}^{\tau(\omega)}r(\omega(i))~|~\run(v)]\geq \val_{\HR} \ominus \frac{\eps}{8} > \val_{\HR} \ominus \frac{\eps}{4}.
\end{equation*}
This finishes the proof of Lemma \ref{lem-max-bound}.

\section{MD-optimal strategies for player $\Diamond$}
\label{app:MD-opt-min}

We prove Item~\ref{prop:finb-MD-diam} of 
Proposition~\ref{prop:finb-MD}, i.e. the fact that for every $\Diamond$-finitely-branching game $G$ there is $\pi \in \MD_{\Diamond}$ such that $\pi$ is $\HR$-optimal in every vertex.
We have already defined $\pi$ as follows: In every state $v\in V_{\Diamond}$, the 
strategy $\pi$ chooses a successor $u$ minimizing $\val_{\HR}(u)$ 
among all successors of $v$. But the $\HR$-optimality of this strategy immediately follows from Lemma \ref{lem:eps-min} (note that this lemma works for $\eps=0$) and Lemma \ref{lem:hrdet} (which says that the least fixed-point $\vec{K}$ of $L$ is equal to the vector of $\HR$-values).


%
\end{document}